\newtheorem{Theorem}{Theorem}
\newtheorem{Corollary}{Corollary}
\newtheorem{Definition}{Definition}
\newtheorem{Remark}{Remark}
\newtheorem{Lemma}{Lemma}
\newtheorem{Example}{Example}
\begin{document}
\doublespacing{}
\title{Low-PMEPR Preamble Sequence Design for Dynamic Spectrum Allocation in OFDMA Systems}

\author{Yajing Zhou,
Zhengchun Zhou,
Zilong Liu,~\IEEEmembership{Member,~IEEE,} Pingzhi Fan,~\IEEEmembership{Fellow,~IEEE,} and
Yong Liang Guan,~\IEEEmembership{Senior Member,~IEEE,}

\thanks{Y. Zhou is with the School of Information Science and Technology, Southwest Jiaotong University,
Chengdu, 611756, China. E-mail: zhouyajing@my.swjtu.edu.cn. }
\thanks{Z. Zhou is with the School of Mathematics, Southwest Jiaotong University,
Chengdu, 611756, China. E-mail: zzc@swjtu.edu.cn. He is also  with the State Key Laboratory of Integrated
Services Networks, Xidian University, Xi¡¯an, 710071, China.}
\thanks{Z. Liu is with School of Computer Science and Electronic Engineering, University of Essex, UK. E-mail: zilong.liu@essex.ac.uk}
\thanks{P. Fan is with the Institute of Mobile Communications, Southwest Jiaotong
University, Chengdu, 611756, China. E-mail: pzfan@swjtu.edu.cn.}

\thanks{Y. L. Guan is with the School of Electrical and Electronic Engineering,
Nanyang Technological University, Singapore, 639798. E-mail:eylguan@ntu.edu.sg.}}
\maketitle

\date{}
\begin{abstract}
Orthogonal Frequency Division Multiple Access (OFDMA) with Dynamic spectrum allocation (DSA) is able to provide a wide range of data rate requirements. This paper is focused on the design of preamble sequences in OFDMA systems with low peak-to-mean envelope power ratio (PMEPR) property in the context of DSA. We propose a systematic preamble sequence design which gives rise to low PMEPR for possibly non-contiguous spectrum allocations. With the aid of Golay-Davis-Jedwab (GDJ) sequences, two classes of preamble sequences are presented. We prove that their PMEPRs are upper bounded by 4 for any DSA over a chunk of four contiguous resource blocks.

\textbf{Keywords:} GDJ sequences, subsequences, PMEPR, preamble, OFDMA, DSA
\end{abstract}

\section{Introduction}

Orthogonal Frequency Division Multiple Access (OFDMA) is a high-rate multiple access scheme where different users are allocated with non-overlapping spectral bands. Due to its resilience to intersymbol interference and low-complexity equalization at the receiver end, OFDMA has attracted significant attention over the past decades. For example, OFDMA has been adopted in IEEE 802.16 standard \cite{IEEE-4} and Long Term Evolution (LTE) downlink \cite{Sesia-LTE-11}. The multicarrier transmission nature of OFDMA implies that it may suffer from high peak-to-mean envelope power ratio (PMEPR) which could result in distorted transmitted signals and reduced communication range \cite{Goldsmith-05}. Different from traditional OFDM system operated over a dedicated contiguous spectral band, dynamic spectrum allocation (DSA) is employed in OFDMA to accommodate instantaneous network conditions and different requirements of quality of service. In DSA, a large contiguous spectral band is divided into several resource blocks (RBs), and any user, after a request-and-grant random access procedure, may be given one or more RBs which are contiguous or non-contiguous. Note that a RB, which is comprised of several contiguous subcarriers, is the smallest spectrum allocation unit in OFDMA systems.

The objective of this paper is to design preamble sequences with low PMEPRs for OFDMA systems with DSA. We consider DSA carried out over four contiguous RBs and target at a preamble sequence design which leads to low-PMEPRs for a variety of DSA schemes. There are two main reasons that we consider a chunk of four contiguous RBs: 1) In practice, four contiguous RBs may well serve most scenarios in DSA. Moreover, a large frequency band can always be divided into multiple chunks each having four RBs; 2) There exists a mathematical beauty in this setting that the maximum PMEPR can be proved to be at most 4. When more than four RBs are considered, the PMEPR upper bound may increase. However, analytical characterization of the PMEPR upper bound in this case is not straightforward. In the sequel, we introduce the related works, followed by a summary of our contributions in this work.
\subsection{Related Works}

A preamble refers to an OFDM symbol (or more) which is placed at the front of each transmission frame for a series of signal processing operations such as  synchronization and channel estimation. A preamble, known at the receiver, is desired to have time domain waveform with low PMEPR in order to avoid excessive signal distortion at the nonlinear region of  the power amplifier.

There have been numerous works on the search of waveforms with low PMEPRs. A remarkable work was done by Davis and Jedwab \cite{Davis-Jedwab-99} who have constructed $2^h$-ary Golay complementary sequences (GCSs) \cite{Golay-61} using the algebraic tool of generalized Boolean functions. In this work, the  Golay complementary pairs (GCPs) constructed by Davis and Jedwab are called the standard GCPs, and any constituent sequence in a standard GCP is called a Golay-Davis-Jadweb (GDJ) sequence. By definition, two polyphase GCSs form a GCP with zero out-of-phase aperiodic autocorrelation sums. When a GCS is spread over a contiguous group of subcarrier frequencies, each
 GCS gives rise to an OFDM waveform with a PMEPR bounded by 2 \cite{Popovic-91}. \cite{Wang-Gong-14} studied the PMEPR distribution of binary GDJ sequences.
As an alternative to GCSs, complementary (or near-complementary) sets with constituent sequences of 2 or more, have been proposed in  \cite{Schmidt-06,Schmidt-07,Yu-Gong-11,Liu-Para-Guan-14}, in which their PMEPRs are upper bounded by a small value slightly larger than 2.

GDJ sequences are excellent candidates for rapid hardware generation especially for large sequence lengths. In the context of DSA based OFDMA systems, however, the resultant PMEPR of a preamble sequence, called a subsequence in this work by taking certain sequence elements of a GDJ sequence (contiguously or non-contiguously), may be unacceptably high. \cite{CK-09} proposed hierarchical construction methods of long complementary sequences out of short ones for $2^k$-RB OFDMA systems, but they required additional resources as side information bits, which could lead to reduced spectrum efficiency. Moreover, the PMEPRs of the concatenations of any three adjacent short complementary sequences were not considered in \cite{CK-09}. In IEEE 802.11ax \cite{IEEE-ax} , there are some designs for short training fields (STF) and long
 training fields (LTF), where the widths of RBs are 26, 52, 106, 242 and 484 for different bandwidths. Although the subsequences on each RB of these training sequences have low PMEPRs, their lengths are fixed. It is found in \cite{WY-10} that certain variations of the concatenations of $2^m$ GCS are still GCSs, whose subsequences on each RB have PMEPRs upper bounded by 2. Chen proposed a construction of complementary sequences (CSs) of length $2^{m-1}+2^v$ with PMEPRs upper bounded by 4, which are subsequences on contiguous RBs of some CSs with lengths $2^m$ by deleting the last $2^{m-1}-2^v$ elements \cite{ChenCY-16}.

\subsection{Main Contributions}

This paper considers the design of preambles in OFDMA system with DSA, where an entire (contiguous) spectral band is divided into four RBs. Due to the dynamism of the spectrum allocation, each of the preambles in these four RBs may be turned on or off independently. We aim to design families of sequences, whose subsequences, all display low PMEPRs. This finding allows us to deploy a fixed preamble sequence, to an OFDMA system, which has guaranteed low-PMEPRs for any DSA schemes over the four RBs.
With the aid of GDJ sequences, we introduce two classes of preamble sequences, whose subsequences corresponding to all the DSA schemes, deployed contiguously or non-contiguously, have PMEPRs of at most 4.

\subsection{Paper Organization and Notations}

The remainder of this paper is organized as follows. Section II gives the preliminaries and the mathematical tools used in the paper. In Section III, we recall the construction of GDJ sequences, and then we discuss the PMEPR properties of subsequences of GDJ sequences. In Section IV, first, we present a class of preamble sequences whose subsequences corresponding to any number of contiguous RBs have PMEPR less than 3.3334. Then, we present a class of preamble sequences whose subsequences under contiguous DSA have PMEPR upper bounded by 4. In Section V, we study the PMEPR properties for the subsequences of the sequences in Section IV for OFDMA systems with non-contiguous DSA. Section VI compares the PMEPR properties of proposed sequences with those of $m$-sequences and Zadoff-Chu sequences by some simulation results. Section VII concludes this paper with some remarks.

We end this section by introducing some notations:
\begin{itemize}

  \item $q$ is an even integer;
  \item $\mathbb{Z}_q=\{0,1,...,q-1\}$ denotes the ring of integers modulo $q$;
  \item $\xi_q$ denotes the $q$th primitive root of unity;
  \item $|a|$ denotes the modulus of the complex number $a$;
    \item $a^*$ denotes the complex conjugation of the complex number $a$;
  \item $|\mathbf{a}|$ denotes the magnitude of the vector $\mathbf{a}$;
  \item $\mathbf{a}^T$ denotes the transposition of the vector $\mathbf{a}$;
    \item $\mathbf{a}(n:m)$ denotes the partial sequence of the sequence $\mathbf{a}$ from the $n$th element to the $m$th element;
   \item $\mathbf{a\parallel b}$ denotes the concatenation of the sequences $\mathbf{a}$ and $\mathbf{b}$;
  \item $|\mathcal{C}|$ denotes the size of the set $\mathcal{C}$;
  \item $\text{Re}(a)$ denotes the real part of the complex number $a$.
\end{itemize}
\section{Preliminaries}
\subsection{Complementary Sequence Sets}

Let $\mathbf{a}=(a(0),a(1),...,a(L-1))$ and $\mathbf{b}=(b(0),b(1),...,b(L-1))$  be complex-valued sequences of length $L$. The aperiodic cross-correlation between $\mathbf{a}$ and $\mathbf{b}$ at a time shift $\tau$ is defined as
\begin{eqnarray*}
 R_{\mathbf{a},\mathbf{b}}(\tau) &=&\left\{\begin{array}{ll}
\sum_{i=0}^{L-1-\tau}a(i) b^*(i+\tau), & 0\leq \tau\leq L-1; \\
\sum_{i=0}^{L-1+\tau}a(i-\tau) b^*(i), &-(L-1)\leq \tau\leq-1; \\
 0, & |\tau|\geq L;
   \end{array}\right.
\end{eqnarray*}
and $ R^*_{\mathbf{a},\mathbf{b}}(\tau)$ denotes the complex conjugation of $ R_{\mathbf{a},\mathbf{b}}(\tau)$. When $\mathbf{a}=\mathbf{b}$,
 $R_{\mathbf{a},\mathbf{a}}(\tau)$ is called the aperiodic auto-correlation of $\mathbf{a}$. In this case, we write  $R_{\mathbf{a},\mathbf{a}}(\tau)=R_{\mathbf{a}}(\tau)$.


\begin{Definition}[\cite{Tseng-Liu-72}]
Let ${\mathcal{A}}=(\mathbf{a}_{i})_{i=1}^{N}$ be a set of $N$ sequences of length $L$.  It is said to be a complementary sequence set (CSS) of size $N$ if $\sum_{i=1}^NR_{\mathbf{a}_{i}}(\tau)=0$ for any $\tau>0$. In this case, every $\mathbf{a}_i$ in $\mathcal{A}$ is called a complementary sequence (CS). In particular, when $N=2$, ${\mathcal{A}}$ is called a Golay complementary pair (GCP), and any constituent sequence in this pair is called a Golay complementary sequence (GCS).
\end{Definition}

\begin{Definition}
Let ${\mathcal{C}}=(\mathbf{a,b})$ be a set of two  sequences of length $L$.  It is said to be an almost complementary pair (ACP) if there exists a positive integer $\mu~(1\leq\mu\leq L-1)$ and a complex number $A\ne 0$, such that for any $\tau$, $1\le\tau<L$, we have
\begin{eqnarray*}
R_{\mathbf{a}}(\tau)+R_{\mathbf{b}}(\tau)&=&\left\{\begin{array}{ll}
                                               A, & \tau=\mu; \\
                                               0, & otherwise.
                                             \end{array}\right.
\end{eqnarray*}
Such $\mathbf{a}$ is called an almost complementary sequence (ACS).
\end{Definition}

\begin{Definition}
Let ${\mathcal{S}}=(\mathbf{a},\mathbf{b})$ and ${\mathcal{K}}=(\mathbf{c},\mathbf{d})$ be two GCPs of length $L$. $\mathcal{S}$ is said to be a Golay mate of $\mathcal{K}$ if
\begin{eqnarray*}
  R_{\mathbf{a},\mathbf{c}}(\tau)+R_{\mathbf{b},\mathbf{d}}(\tau)=0, && 0\leq\tau\leq L-1.
\end{eqnarray*}
\end{Definition}

\subsection{Generalized Boolean Functions}

For $\mathbf{x}=(x_1,x_2,\cdots,x_m)\in\mathbb{Z}_2^m$, a generalized Boolean function $f(\mathbf{x})$ is defined as a mapping $f$ from $\mathbb{Z}_2^m$ to $\mathbb{Z}_q$:
$$
f(\mathbf{x})=\sum_{i=0}^{2^m-1}a_i\prod_{k=1}^mx_k^{i_k},\,\,a_i\in \mathbb{Z}_q,
$$
where $(i_1,i_2,\cdots,i_m)$ is the binary representation of the integer $i=\sum_{k=1}^{m}2^{k-1}i_k$. For any given $f(\mathbf{x})$, we can define a sequence
\begin{eqnarray*}
\mathbf{f}&=&(f(0),f(1),\cdots, f(2^{m}-1))\\
&=&(f(0,0,\cdots,0),f(1,0,\cdots,0),\cdots,f(1,1,\cdots,1)).
\end{eqnarray*}
One can naturally associate a complex-valued sequence $\psi(\mathbf{f}^{(L)})$ of length $L$ with $\mathbf{f}^{(L)}$ as
\begin{eqnarray*}
\psi(\mathbf{f}^{(L)}) =(\xi_q^{f(0)},\xi_q^{f(1)},\cdots,\xi_q^{f(L-1)}).
\end{eqnarray*}
From now on, whenever the context is clear, we ignore the superscript of $\mathbf{f}^{(L)}$ unless the sequence length is specified.

\subsection{PMEPRs of OFDMA Symbols}

Let us consider an $L$-subcarrier OFDMA system. Without loss of generality, for $k$th transmitter, let $\Omega$ be the subcarrier index set allocated to this system. For a $\mathbb{Z}_q$-valued sequence $\mathbf{a}=(a(0),a(1),...,a(L-1))$, there is a sequence $\mathbf{\tilde{a}}=(\tilde{a}(0),\tilde{a}(1),...,\tilde{a}(L-1))$ corresponding to $\Omega$, where
\begin{eqnarray*}
 \tilde{a}(i) &=&\left\{\begin{array}{ll}
\xi_q^{a(i)}, & i\in\Omega, \\
0, &i\not\in\Omega. \\
   \end{array}\right.
\end{eqnarray*}
The transmitted OFDMA signal is the real part of the complex envelope, which can be written as
\begin{eqnarray*}
S_\mathbf{\tilde{a}}(t)= \sum_{i=0}^{L-1}\tilde{a}(i)e^{2\pi\left(f_{c}+i{\Delta}f\right)t\sqrt{-1}},~ 0\leq t<T,
\end{eqnarray*}
where $f_{c}$ denotes the carrier frequency and ${\Delta}f=\frac{1}{T}$ denotes the subcarrier spacing, with $T$ being the OFDMA symbol duration. The sequence $\mathbf{\tilde{a}}$ of length $L$ is called the modulating codeword of the OFDMA symbol for the subcarrier set $\Omega$.

The instantaneous power of an OFDMA sequence (codeword) ${\mathbf{\tilde{a}}}$ is given by
\begin{eqnarray}\label{eq-p}
P_\mathbf{\tilde{a}}(t)=R_\mathbf{\tilde{a}}(0)+2\text{Re}\left(
\sum_{\tau=1}^{L-1}R_\mathbf{\tilde{a}}(\tau)
e^{2\pi(\tau{\Delta}f)t\sqrt{-1}}\right).
\end{eqnarray}
The peak-to-mean energy power ratio (PMEPR) of the OFDMA sequence ${\mathbf{\tilde{a}}}$ is then defined as:
\begin{eqnarray}\label{eq-pmepr}
\hbox{PMEPR}(\mathbf{\tilde{a}}) =\frac{\underset{t\in[0,T)}{\sup} P_{\mathbf{\tilde{a}}}(t)}{P_{av}({\mathbf{\tilde{a}}})},
\end{eqnarray}
where $P_{av}({\mathbf{\tilde{a}}})$ is the average power of $\mathbf{\tilde{a}}$, and
 \begin{eqnarray}\label{eq-pav}
 P_{av}({\mathbf{\tilde{a}}})=\frac{1}{T}\int_{[0,T]}P_{\mathbf{\tilde{a}}}(t)dt=
 \parallel\mathbf{\tilde{a}}\parallel^2=R_{\mathbf{\tilde{a}}}(0).
 \end{eqnarray}
 In addition, define the ``instantaneous-to-mean envelope power
ratio (IMEPR)'' of $\mathbf{\tilde{a}}$ as
\begin{eqnarray*}
\hbox{IMEPR}(\mathbf{\tilde{a}},t)=P_{\mathbf{\tilde{a}}}(t)/P_{av}({\mathbf{\tilde{a}}}),
\end{eqnarray*}
clearly, $\hbox{PMEPR}(\mathbf{\tilde{a}})=sup_{t\in[0,T)}\hbox{IMEPR}(\mathbf{\tilde{a}},t).$
 Accordingly, the PMEPR of a sequence set ${\mathcal{A}}=\{\mathbf{a}_1,\mathbf{a}_2,\cdots,
\mathbf{a}_N\}$ is  defined as
\begin{eqnarray*}
\hbox{PMEPR}({\mathcal{A}}) &=& \underset{\mathbf{a}_i\in {\mathcal{A}}}{\max}~\hbox{PMEPR}(\mathbf{a}_i).
\end{eqnarray*}
Combining Eqs. (\ref{eq-p}) and (\ref{eq-pmepr}), it can be derived that for sequence set ${\mathcal{A}}$ of length $n$, we have
\begin{eqnarray}\label{eq-pmeprN}
\emph{PMEPR}(\mathbf{a}_i)\leq \frac{1}{R_{\mathbf{a}_i}(0)}\left(\sum_{k=1}^NR_{\mathbf{a}_k}(0)
+2\sum_{\tau=1}^{L-1}\left|\sum_{k=1}^NR_{\mathbf{a}_k}(\tau)\right|\right)&&(1\leq i\leq N),
\end{eqnarray}
which implies the following lemma:
\begin{Lemma}[\cite{Liu-Guan-16}]\label{lem-bound}
Let $\mathcal{A}$ be a CSS of size $N$ in which all the sequences have the same energy. Then the PMEPR of $\mathcal{A}$ is upper bounded by $N$.
\end{Lemma}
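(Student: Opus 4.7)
The plan is to apply the general upper bound~(\ref{eq-pmeprN}) that has already been derived just before the lemma is stated, and then exploit the two defining hypotheses of the lemma to simplify each of its two summations. For any fixed $\mathbf{a}_i\in\mathcal{A}$, instantiating (\ref{eq-pmeprN}) gives
\begin{eqnarray*}
\hbox{PMEPR}(\mathbf{a}_i)\le \frac{1}{R_{\mathbf{a}_i}(0)}\left(\sum_{k=1}^N R_{\mathbf{a}_k}(0)+2\sum_{\tau=1}^{L-1}\Bigl|\sum_{k=1}^N R_{\mathbf{a}_k}(\tau)\Bigr|\right).
\end{eqnarray*}

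Next I would invoke the two hypotheses in turn. The CSS property $\sum_{k=1}^N R_{\mathbf{a}_k}(\tau)=0$ for every $\tau\ge 1$ annihilates the inner absolute value, so the entire second summation vanishes. The equal-energy assumption $R_{\mathbf{a}_1}(0)=\cdots=R_{\mathbf{a}_N}(0)=R_{\mathbf{a}_i}(0)$ collapses the first summation to $N\cdot R_{\mathbf{a}_i}(0)$. After canceling $R_{\mathbf{a}_i}(0)$, one obtains $\hbox{PMEPR}(\mathbf{a}_i)\le N$, and taking the maximum over $i=1,\dots,N$ yields $\hbox{PMEPR}(\mathcal{A})\le N$.

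There is essentially no obstacle here; the lemma is a one-line corollary of (\ref{eq-pmeprN}). A slightly more self-contained alternative would be to argue directly from (\ref{eq-p}): since $P_{\mathbf{a}_k}(t)=|S_{\mathbf{a}_k}(t)|^2\ge 0$, one has $P_{\mathbf{a}_i}(t)\le \sum_{k=1}^N P_{\mathbf{a}_k}(t)$ for every $t\in[0,T)$. Summing (\ref{eq-p}) over $k$ and applying the CSS condition shows that the oscillatory $\tau\ge 1$ contributions cancel identically, leaving the time-independent expression $\sum_{k=1}^N P_{\mathbf{a}_k}(t)=\sum_{k=1}^N R_{\mathbf{a}_k}(0)$. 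Dividing by $P_{av}(\mathbf{a}_i)=R_{\mathbf{a}_i}(0)$ and invoking equal energies produces the same bound $N$, now valid uniformly in $t$ rather than only at the supremum.
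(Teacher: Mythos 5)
Your argument is correct and is exactly the route the paper takes: Lemma~\ref{lem-bound} is presented there as an immediate consequence of the bound~(\ref{eq-pmeprN}), with the CSS condition eliminating the $\tau\ge 1$ terms and the equal-energy assumption reducing the remaining sum to $N\cdot R_{\mathbf{a}_i}(0)$. Your spelled-out version (and the equivalent direct argument from~(\ref{eq-p})) matches this reasoning, so there is nothing to correct.
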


Lemma \ref{lem-bound} is useful to evaluate the PMEPR of a sequence in the sequel.

\section{PMEPR Properties of the Subsequences of GDJ sequences}

In this section, first, we give some notations needed in the sequel. Then, we introduce the GDJ sequences and discuss the PMEPR properties of their subsequences.

For a $\mathbb{Z}_q$-valued sequence $\mathbf{a}$ of length $L=4H$, where $H$ is a positive integer which can be seen as the width of a RB, define four subsequences corresponding to the RBs from it as
\begin{eqnarray*}
  \mathbf{a}_1=(a_0,a_1,...,a_{H-1}),&&\mathbf{a}_2=(a_H,a_{H+1},...,a_{2H-1}),\\ \mathbf{a}_3=(a_{2H},a_{2H+1},...,a_{3H-1}),&&\mathbf{a}_4=(a_{3H},a_{3H+1},...,a_{4H-1}).
\end{eqnarray*}
Based on these subsequences and a zero-sequence $\mathbf{0}_H=(0,...,0)$, we define 15 subsequences as shown in Fig. \ref{As},
\begin{figure}[htbp]
\centering
\includegraphics[width=5.25in]{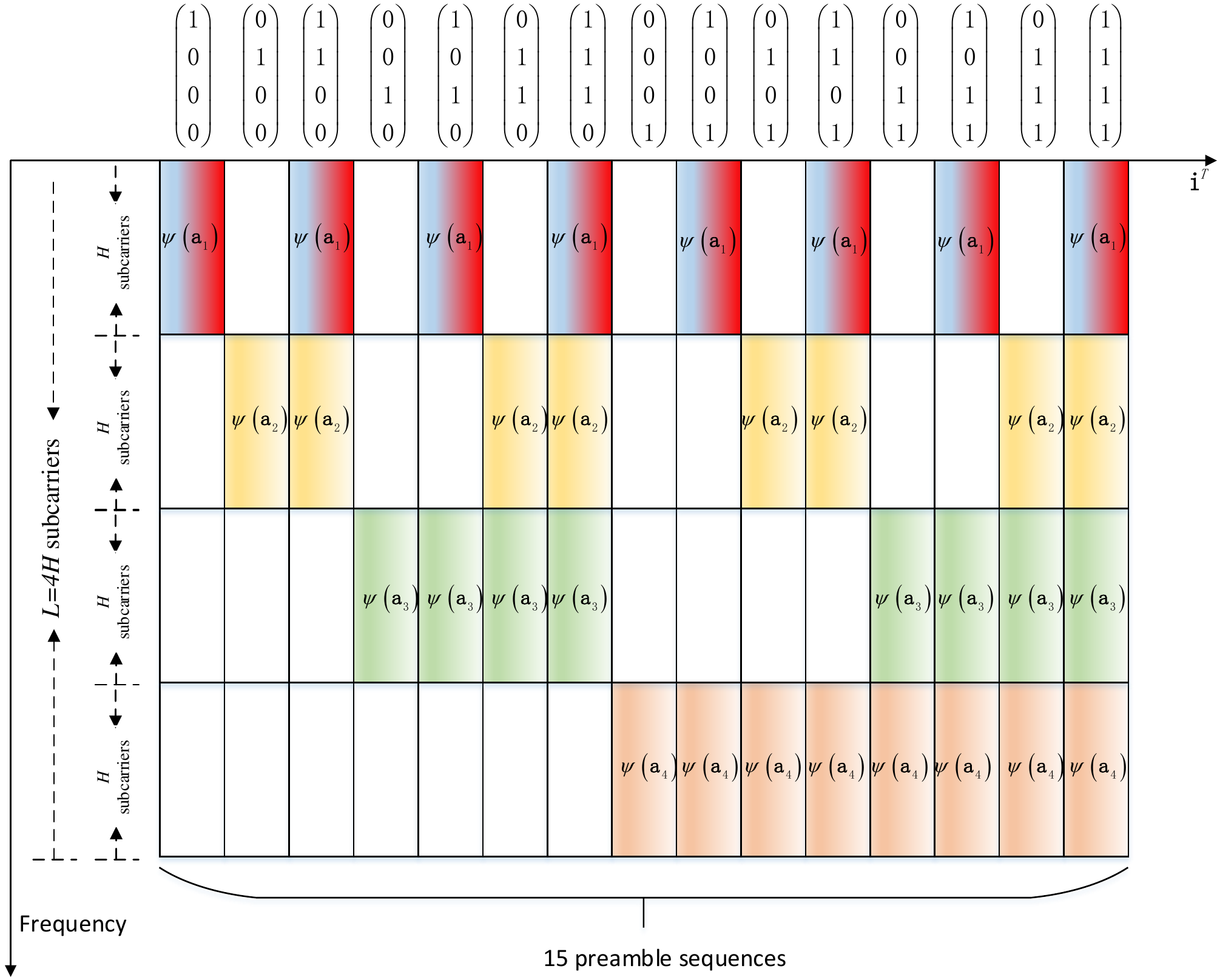}\\
\caption{Subsequences $\mathbf{A}_i~(1\leq i\leq 15)$ constructed from $\mathbf{a}$ and $\mathbf{0}_H$}\label{As}
\end{figure}
where $L$ is the number of subcarriers, $i=\sum_{k=1}^4i_k2^{k-1},~ \mathbf{A}_{i}=\mathbf{A}_{(i_1,i_2,i_3,i_4)}=(i_1\psi(\mathbf{a}_1)\parallel
 i_2\psi(\mathbf{a}_2)\parallel i_3\psi(\mathbf{a}_3)\parallel$ $i_4\psi(\mathbf{a}_4))$, $\mathbf{i}=(i_1,i_2,i_3,i_4)$ and $i_k=1~(1\leq k\leq4)$ means the $k$th RB has been allocated to the transmitter, otherwise, $i_k=0$. It is straightforward that $\mathbf{A}_{15}=\mathbf{A}_{(1,1,1,1)}=\psi(\mathbf{a})$. These sequences can be divided into 2 disjoint subsequence sets of $\mathbf{a}$: contiguous subsequence set $\mathcal{C}=\{\mathbf{A}_1,\mathbf{A}_2,\mathbf{A}_3,\mathbf{A}_4,\mathbf{A}_6,\mathbf{A}_7,
 \mathbf{A}_8,\mathbf{A}_{12},\mathbf{A}_{14},\mathbf{A}_{15}\}$ and non-contiguous subsequence set $\mathcal{NC}=\{\mathbf{A}_5,\mathbf{A}_9,\mathbf{A}_{10},\mathbf{A}_{11},$ $\mathbf{A}_{13}\}$.
 (Note that in IEEE 802.11ax \cite{IEEE-ax}, they only considered subsequences $\mathbf{A}_1,\mathbf{A}_2,\mathbf{A}_3,\mathbf{A}_4,\mathbf{A}_8,\mathbf{A}_{12},\mathbf{A}_{15}$,
 which are included in $\mathcal{C}$.)

In this paper, we define $\hbox{PMEPR}_C(\mathbf{a})$ as the maximum PMEPR of $\mathcal{C}$, i.e., $$\hbox{PMEPR}_{\mathcal{C}}(\mathbf{a})=\max_{s\in\{1,2,3,4,6,7,8,12,14,15\}}\hbox{PMEPR}(\mathbf{A}_s),$$
and we define $\hbox{PMEPR}_{\mathcal{NC}}(\mathbf{a})$ as the maximum PMEPR of $\mathcal{NC}$, i.e., $$\hbox{PMEPR}_{\mathcal{NC}}(\mathbf{a})=\max_{s\in\{5,9,10,11,13\}}\hbox{PMEPR}(\mathbf{A}_s).$$
In addition, we define $\hbox{PMEPR}_{\mathcal{A}}(\mathbf{a})$ as the maximum PMEPR of $\mathcal{A}=\mathcal{NC}\bigcup\mathcal{C}$, i.e.,
$$\hbox{PMEPR}_{\mathcal{A}}(\mathbf{a})=\max_{s\in\{1,2,...,15\}}\hbox{PMEPR}(\mathbf{A}_s)
=max\{\hbox{PMEPR}_{\mathcal{C}}(\mathbf{a}),\hbox{PMEPR}_{\mathcal{NC}}(\mathbf{a})\}.$$

The following lemma gives the construction of GDJ sequences.

\begin{Lemma}[Corollary 11 of \cite{Paterson-00}, Theorem 3.3 of \cite{R.A-08}] \label{lem-GCP}
Let $q$ be an even integer and $m$ be a positive integer. Let
\begin{eqnarray*}
a(\mathbf{x}) &=& \frac{q}{2}\sum_{k=1}^{m-1}x_{\pi(k)}x_{\pi(k+1)}+\sum_{k=1}^mc_kx_k+c, \\
b(\mathbf{x}) &=& a(\mathbf{x})+ \frac{q}{2}x_{\pi(1)},\\
 c(\mathbf{x}) &=& a(\mathbf{x})+ \frac{q}{2}x_{\pi(m)},
\end{eqnarray*}
where $\pi$ is a permutation of $\{1,2,\cdots,m\}$ and $\mathbf{x}\in\mathbb{Z}_2^m,c_k,c\in\mathbb{Z}_q$. Then $(\psi(\mathbf{a}),\psi(\mathbf{b})))$ and $(\psi(\mathbf{a}),\psi(\mathbf{c}))$ are GCPs of length $2^m$. In particular, let
\begin{eqnarray*}
d(\mathbf{x}) &=& a(\mathbf{x})+ \frac{q}{2}x_{\pi(1)}+ \frac{q}{2}x_{\pi(m)},
\end{eqnarray*}
then $(\psi(\mathbf{a}),\psi(\mathbf{c}))$ is the Golay mate of $(\psi(\mathbf{b}),\psi(\mathbf{d}))$.
\end{Lemma}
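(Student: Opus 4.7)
The plan is to apply the classical Davis-Jedwab pairing/involution. Writing $L = 2^m$ and $j = i+\tau$, for each shift $\tau$ with $1 \le \tau \le L-1$ I want to show
$$R_{\psi(\mathbf{a})}(\tau) + R_{\psi(\mathbf{b})}(\tau) = \sum_{i=0}^{L-1-\tau}\left(\xi_q^{a(i)-a(j)} + \xi_q^{b(i)-b(j)}\right) = 0$$
by constructing an involution under which paired summands contribute with opposite signs. Let $(i_1,\ldots,i_m)$ and $(j_1,\ldots,j_m)$ denote the binary representations of $i$ and $j$, and let $v=v(i,j)$ be the smallest index with $i_{\pi(v)}\ne j_{\pi(v)}$.

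The case $v=1$ is immediate: the extra term $\frac{q}{2}x_{\pi(1)}$ in $b$ produces $b(i)-b(j)-[a(i)-a(j)] = \frac{q}{2}(i_{\pi(1)}-j_{\pi(1)}) \equiv \frac{q}{2} \pmod{q}$, so $\xi_q^{b(i)-b(j)} = -\xi_q^{a(i)-a(j)}$ and the two summands at this index cancel between the $a$- and $b$-correlations. For $v \ge 2$, I define the involution $\iota$ that toggles bit $\pi(v-1)$ in both $i$ and $j$. Since the same bit is flipped, $\iota(j)-\iota(i)=\tau$, and the bits at positions $\pi(1),\ldots,\pi(v-1)$ retain their agreement, so $v(\iota(i),\iota(j))=v$ and $\iota$ is a genuine pairing. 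The key modular computation
$$[a(\iota(i))-a(i)] - [a(\iota(j))-a(j)] = (1-2i_{\pi(v-1)})\cdot\tfrac{q}{2}(i_{\pi(v)}-j_{\pi(v)}) \equiv \tfrac{q}{2} \pmod{q}$$
follows because the linear coefficient $c_{\pi(v-1)}$ and, when $v\ge 3$, the neighbour $x_{\pi(v-2)}$ contribute identically to $i$ and $j$ (they agree at positions $<v$), while the neighbour $x_{\pi(v)}$ contributes a net $\pm\frac{q}{2}$. Hence $\xi_q^{a(\iota(i))-a(\iota(j))} = -\xi_q^{a(i)-a(j)}$, and the same holds with $b$ in place of $a$ because the flip at $\pi(v-1)$ either avoids the $\frac{q}{2}x_{\pi(1)}$ term or is symmetric in $i,j$. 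Thus the $(i,j)$ and $(\iota(i),\iota(j))$ contributions cancel inside $R_{\psi(\mathbf{a})}(\tau)$ and $R_{\psi(\mathbf{b})}(\tau)$ separately, completing the first GCP claim. The pair $(\psi(\mathbf{a}),\psi(\mathbf{c}))$ then follows by applying the identical argument with the reversed permutation $\pi'(k)=\pi(m+1-k)$, which swaps the roles of $\pi(1)$ and $\pi(m)$.

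For the Golay-mate claim, the expansion $\xi_q^{b(i)-d(j)} = \xi_q^{a(i)-c(j)}\cdot\xi_q^{(q/2)(i_{\pi(1)}-j_{\pi(1)})}$ shows that indices with $v(i,j)=1$ cancel directly between the two cross-correlations. On the remaining indices where $v(i,j)\ge 2$, the involution $\iota$ applies verbatim: the additional term $\frac{q}{2}x_{\pi(m)}$ in $c$ and $d$ is untouched by the flip (since $v-1\le m-1$ forces $\pi(v-1)\ne\pi(m)$), so the $\pm\frac{q}{2}$ shift carries through and yields $\xi_q^{a(\iota(i))-c(\iota(j))} = -\xi_q^{a(i)-c(j)}$. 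Summing gives $R_{\psi(\mathbf{a}),\psi(\mathbf{c})}(\tau) + R_{\psi(\mathbf{b}),\psi(\mathbf{d})}(\tau)=0$ for every $\tau$.

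The main obstacle is the bookkeeping in the $v\ge 2$ case: one must verify that flipping bit $\pi(v-1)$ leaves exactly one surviving $\frac{q}{2}$-contribution from the quadratic form, coming from the disagreement at $\pi(v)$, and that the perturbation terms $\frac{q}{2}x_{\pi(1)}$ and $\frac{q}{2}x_{\pi(m)}$ distinguishing $b,c,d$ from $a$ are either symmetric under $\iota$ or handled by the $v=1$ case. Beyond this, everything reduces to routine modulo-$q$ arithmetic on the generalized Boolean function.
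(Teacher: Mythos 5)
Your argument is the standard Davis--Jedwab/Paterson involution proof, and it is essentially sound. Note that the paper itself does not prove this lemma (it quotes it as Corollary 11 of Paterson), but the same two-case technique --- cancel the $a$-term against the $b$-term at the same index when $i_{\pi(1)}\ne j_{\pi(1)}$, otherwise toggle bit $\pi(t-1)$ at the smallest disagreement position $t$ --- is exactly what the paper deploys in its own proofs of Theorems 1--3. Your bookkeeping is correct: the common flip preserves $j-i=\tau$, keeps the indices in the summation range, preserves $v$, and yields a net shift of $\frac{q}{2}$ because the linear term $c_{\pi(v-1)}$ and the neighbour $x_{\pi(v-2)}$ contribute equally for $i$ and $j$ (they agree below position $v$) while the neighbour $x_{\pi(v)}$ contributes $\pm\frac{q}{2}$; the extra terms $\frac{q}{2}x_{\pi(1)}$, $\frac{q}{2}x_{\pi(m)}$ are either symmetric under the flip or handled in the $v=1$ case. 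The reduction of the pair $(\psi(\mathbf{a}),\psi(\mathbf{c}))$ to the first case via the reversed permutation $\pi'(k)=\pi(m+1-k)$ is also valid, since the quadratic form is invariant under path reversal.

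The one genuine omission is the shift $\tau=0$ in the Golay-mate identity. The paper's definition of a Golay mate requires $R_{\psi(\mathbf{a}),\psi(\mathbf{c})}(\tau)+R_{\psi(\mathbf{b}),\psi(\mathbf{d})}(\tau)=0$ for all $0\le\tau\le L-1$, including $\tau=0$, but your case analysis is organized around $v(i,j)$, the smallest position where $i$ and $j=i+\tau$ disagree, which does not exist when $\tau=0$ (there $i=j$ for every summand), so the conclusion ``for every $\tau$'' is not justified as written. The fix is one line: at $\tau=0$ one has $a(i)-c(i)=b(i)-d(i)=-\frac{q}{2}\,i_{\pi(m)}$, so each cross-correlation equals $\sum_{i=0}^{2^m-1}(-1)^{i_{\pi(m)}}=0$ (equivalently, pair $i$ with the index obtained by toggling bit $\pi(m)$). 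With that line added, your proof is complete.
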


\begin{Remark}
In particular, when $q=2^s$ where $s$ is a positive integer, Lemma \ref{lem-GCP} is the Theorem 3 of \cite{Davis-Jedwab-99}. Any sequence constructed by Lemma \ref{lem-GCP} is called a Golay-Davis-Jedwab (GDJ) sequence.
\end{Remark}

\begin{Remark}\label{re-gdjpmepr}
The PMEPR of every GDJ sequence is upper bounded by 2.
\end{Remark}


The PMEPRs of the subsequences of GDJ sequences may be large. We illustrate this by the following example.

\begin{Example}\label{eg-ab}
For $q=2,m=9$, let $\pi$ be a permutation of $\{1,2,...,9\}$ with $\pi(1)=7,\pi(2)=9,\pi(3)=6,\pi(4)=3,\pi(5)=1,\pi(6)=5,\pi(7)=4,\pi(8)=8,\pi(9)=2$, and $a(\mathbf{x})=\sum_{k=1}^{m-1}x_{\pi(k)}x_{\pi(k+1)}$. Tables \ref{table-cgdj} and  \ref{table-ncgdj} show the PMEPRs for the subsequences of $\mathbf{a}$. It can be observed that $\hbox{PMEPR}_\mathcal{C}(\mathbf{a})=8$ and $\hbox{PMEPR}_{\mathcal{NC}}(\mathbf{a})=4$, which would be undesirable in practical communication systems.
\begin{table}[htbp]
  \centering
    \caption{The PMEPRs of the contiguous subsequences of the GDJ sequence $\mathbf{a}$ in Example \ref{eg-ab}}\label{table-cgdj}
    {\small
  \begin{tabular}{|c|c|c|c|c|c|c|c|c|c|c|}
  \hline
  $\mathbf{A}_s$ &$\mathbf{A}_1$ & $\mathbf{A}_2$ & $\mathbf{A}_3$ & $\mathbf{A}_4$ & $\mathbf{A}_6$ & $\mathbf{A}_7$ & $\mathbf{A}_8$ & $\mathbf{A}_{12}$&$\mathbf{A}_{14}$&$\mathbf{A}_{15}$\\\hline
  PMEPR$(\mathbf{A}_s)$&8.0000 & 7.6216 &4.0000 & 7.3279 & 3.8469 &3.0682&7.7605&3.9155&2.9077&2.0000\\
  \hline
\end{tabular}}

\end{table}

\begin{table}[htbp]
  \centering
    \caption{The PMEPRs of the non-contiguous subsequences of the GDJ sequence $\mathbf{a}$ in Example \ref{eg-ab}}\label{table-ncgdj}
  \begin{tabular}{|c|c|c|c|c|c|}
  \hline
  $\mathbf{A}_s$ &$\mathbf{A}_5$ & $\mathbf{A}_9$ & $\mathbf{A}_{10}$ & $\mathbf{A}_{11}$ & $\mathbf{A}_{13}$ \\\hline
 PMEPR$(\mathbf{A}_s)$& 4.0000 &4.0000  & 3.9215 &  3.1025&3.3668\\
  \hline
\end{tabular}

\end{table}

\end{Example}

Example \ref{eg-ab} motivates us to search some GDJ sequences whose subsequences have low-PMEPR properties.
\section{Proposed Low-PMEPR Preamble Sequences for Contiguous Spectrum Allocation}

In this section, we introduce two classes of preamble sequences with good PMEPR properties of the subsequences for the OFDMA system of contiguous frequency bands allocation.
\subsection{Preamble Sequences with PMEPRs of Contiguous Subsequences Upper Bounded by $\frac{10}{3}$}

\begin{Theorem}\label{thm-acp1}
Let $m\geq 2$ be a positive integer and $q$ be an even integer. Let
\begin{eqnarray*}
a(\mathbf{x}) &=& \frac{q}{2}\sum_{k=1}^{m-1}x_{\pi(k)}x_{\pi(k+1)}+\sum_{k=1}^mc_kx_k+c, \\
b(\mathbf{x}) &=& a(\mathbf{x})+ \frac{q}{2}x_{\pi(1)},\\
\end{eqnarray*}
where $\mathbf{x}\in\mathbb{Z}_2^m,c_k,c\in\mathbb{Z}_q$. Let $H=2^{m-2}$ and  $\pi$ is a permutation of $\{1,2,\cdots,m\}$ satisfying $\pi(m)=m$ and $\pi(m-1)=m-1$, then the GDJ sequences have the following properties:
\begin{enumerate}
\item for $s=7,14$, $(\mathbf{A}_s,\mathbf{B}_s)$ is an ACP, i.e.,
\begin{eqnarray*}
R_{\mathbf{A}_s}(\tau)+R_{\mathbf{B}_s}(\tau)&=&
\left\{\begin{array}{ll}
6H, & \tau=0,\\
\xi_q^{-{q\over 2}\left({s\over 7}-1\right)-c_m}\cdot 2H, &\tau=2H,\\
0, & \text{otherwise};
\end{array}\right.
\end{eqnarray*}
\item for $s=3,6,12$, $(\mathbf{A}_s,\mathbf{B}_s)$ is a GCP;
\item for $s=1,2,4,8$, $(\mathbf{A}_s,\mathbf{B}_s)$ is a GCP.
\end{enumerate}

\end{Theorem}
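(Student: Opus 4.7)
The plan is to exploit the special structure enforced by $\pi(m-1) = m-1$ and $\pi(m) = m$: the binary variables $x_{m-1}$ and $x_m$ sit at the end of the quadratic chain in $a(\mathbf{x})$, so the only quadratic terms involving them are $\frac{q}{2}x_{\pi(m-2)}x_{m-1}$ and $\frac{q}{2}x_{m-1}x_m$. Since the four RBs correspond precisely to the four settings of $(x_{m-1}, x_m)$, restrictions of $a$ to each RB are themselves GDJ sequences on $m-2$ variables (with permutation $\pi|_{\{1,\ldots,m-2\}}$ and first index $\pi(1)\le m-2$), and a direct calculation gives the phase-shift identities
\begin{equation*}
\psi(\mathbf{a}_3) = \xi_q^{c_m}\,\psi(\mathbf{a}_1), \qquad \psi(\mathbf{a}_4) = -\xi_q^{c_m}\,\psi(\mathbf{a}_2),
\end{equation*}
with analogous relations for $\mathbf{b}_j$; moreover $\mathbf{b}_j = \mathbf{a}_j + \frac{q}{2}x_{\pi(1)}$ for each $j$. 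These relations drive the whole argument.

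Claims 2 and 3 then follow almost immediately from Lemma~\ref{lem-GCP}. For $s\in\{1,2,4,8\}$ (single RB), $(\psi(\mathbf{a}_j),\psi(\mathbf{b}_j))$ is a GCP of length $H$ on $m-2$ variables; zero-padding outside the $j$-th RB preserves the aperiodic auto-correlations, so $(\mathbf{A}_s,\mathbf{B}_s)$ is a GCP of length $4H$. For $s\in\{3,12\}$, the non-zero part is $\psi(a|_{x_m=\epsilon})$ with $\epsilon\in\{0,1\}$, which is a GDJ sequence on $m-1$ variables whose standard Golay mate is $\psi(b|_{x_m=\epsilon})$. For $s=6$, substituting $(x_{m-1},x_m)=(1-j_{m-1},j_{m-1})$ into the GDJ formula for $a$ and using $-\frac{q}{2}\equiv\frac{q}{2}\pmod{q}$ rewrites $\psi(\mathbf{a}_2)\|\psi(\mathbf{a}_3)$ as another GDJ sequence on $m-1$ variables with modified permutation $\pi'$ given by $\pi'(k)=\pi(k)$ for $k\le m-2$ and $\pi'(m-1)=m-1$; the Golay-mate condition transfers since $\pi'(1)=\pi(1)\le m-2$, and Lemma~\ref{lem-GCP} yields the GCP.

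For Claim 1 ($s=7$; the case $s=14$ is fully analogous), I would decompose $\mathbf{A}_7 = \psi(\mathbf{a}) - \mathbf{A}_8$ and $\mathbf{B}_7 = \psi(\mathbf{b}) - \mathbf{B}_8$ and expand $R_{\mathbf{A}_7}(\tau)+R_{\mathbf{B}_7}(\tau)$ via the identity $R_{\mathbf{u}-\mathbf{v}} = R_{\mathbf{u}} + R_{\mathbf{v}} - R_{\mathbf{u},\mathbf{v}} - R_{\mathbf{v},\mathbf{u}}$. For $\tau>0$, the auto-correlation brackets $[R_{\psi(\mathbf{a})}+R_{\psi(\mathbf{b})}](\tau)$ and $[R_{\mathbf{A}_8}+R_{\mathbf{B}_8}](\tau)$ vanish by Lemma~\ref{lem-GCP} and Claim 3 respectively, and the reverse cross-correlation bracket $[R_{\mathbf{A}_8,\psi(\mathbf{a})}+R_{\mathbf{B}_8,\psi(\mathbf{b})}](\tau)$ vanishes by a short direct check (it reduces to $R_{\psi(\mathbf{a}_4)}(\tau)+R_{\psi(\mathbf{b}_4)}(\tau)$ for $0<\tau<H$ and is trivially zero for $\tau\ge H$). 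Hence the whole expression collapses to $-[R_{\psi(\mathbf{a}),\mathbf{A}_8}+R_{\psi(\mathbf{b}),\mathbf{B}_8}](\tau)$. Splitting the summation index $i$ into the four RBs and substituting the phase identities above, every resulting term is either a length-$H$ auto-correlation of $\psi(\mathbf{a}_j)$ (canceled by its $\mathbf{b}$-counterpart via Claim 3) or a length-$H$ cross-correlation $R_{\psi(\mathbf{a}_1),\psi(\mathbf{a}_2)}(\sigma)$ (canceled by its $\mathbf{b}$-counterpart via decomposing the length-$2H$ GCP $(\psi(a|_{x_m=0}),\psi(b|_{x_m=0}))$ from Claim 2 into its single-RB and cross-RB pieces). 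The sole exception is $\tau=2H$, where the shift aligns $\psi(\mathbf{a}_2)$ with $\psi(\mathbf{a}_4)=-\xi_q^{c_m}\psi(\mathbf{a}_2)$; the surviving inner product contributes $-H\xi_q^{-c_m}$ from each of $\mathbf{a},\mathbf{b}$, giving the claimed $2H\xi_q^{-c_m}$ after the outer sign flip.

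The main obstacle is the case enumeration in the last step: the shift $\tau$ must be partitioned into the seven subintervals/points of $(0,4H)$ determined by which pairs of RBs overlap under shifting by $\tau$, and in each case the resulting length-$H$ correlations must be recognized as restrictions of the two GCP identities already established. Nothing deeper than Lemma~\ref{lem-GCP}, the GCPs from Claims 2 and 3, and the phase identities above is needed, but the bookkeeping is lengthy.
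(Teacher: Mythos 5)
Your proposal is correct, and for the main claim (part 1) it takes a genuinely different route from the paper. The paper proves part 1 (and the $s=6$ case of part 2) by working directly with the generalized Boolean function: it writes $R_{\mathbf{A}_s}(\tau)+R_{\mathbf{B}_s}(\tau)=2\sum_{i\in J(\tau)}\xi_q^{A_s(i)-A_s(j)}$ with $J(\tau)=\{i:\,i_{\pi(1)}=j_{\pi(1)}\}$, kills every $\tau\ne 2H$ by the Davis--Jedwab pairing trick (flip the bit in position $\pi(t-1)$, where $t$ is the first index with $i_{\pi(t)}\ne j_{\pi(t)}$), and evaluates the peak at $\tau=2H$ directly from the bit pattern forced by $\pi(m)=m$, $\pi(m-1)=m-1$. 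You instead argue structurally: the phase identities $\psi(\mathbf{a}_3)=\xi_q^{c_m}\psi(\mathbf{a}_1)$ and $\psi(\mathbf{a}_4)=-\xi_q^{c_m}\psi(\mathbf{a}_2)$ (which do hold under the stated constraints on $\pi$), the decomposition $\mathbf{A}_7=\psi(\mathbf{a})-\mathbf{A}_8$ together with $R_{\mathbf{u}-\mathbf{v}}=R_{\mathbf{u}}+R_{\mathbf{v}}-R_{\mathbf{u},\mathbf{v}}-R_{\mathbf{v},\mathbf{u}}$, and the already-established sub-block GCPs reduce everything to block cross-correlations, whose vanishing you correctly extract by splitting the length-$2H$ GCP $(\mathbf{A}_3,\mathbf{B}_3)$ into its single-RB and cross-RB pieces; the only survivor is the alignment of $\psi(\mathbf{a}_2)$ with $\psi(\mathbf{a}_4)$ at $\tau=2H$, giving $2H\xi_q^{-c_m}$ for $s=7$, and the extra $\xi_q^{-q/2}$ for $s=14$ comes out the same way from the RB1/RB3 alignment. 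I checked the key steps (vanishing of $R_{\mathbf{A}_8,\psi(\mathbf{a})}+R_{\mathbf{B}_8,\psi(\mathbf{b})}$, the reduction of the remaining bracket on each $\tau$-range, and the peak value) and they hold, with no circularity since parts 2 and 3 are proved first. Your route buys modularity and in fact mirrors the block-decomposition technique the paper only deploys later (in the proof of Theorem 3), at the cost of the case bookkeeping you acknowledge; the paper's direct argument handles all $\tau$ in one uniform computation. For parts 2 and 3 your restriction argument coincides with the paper's, except at $s=6$, where your reparametrization of $\psi(\mathbf{a}_2)\parallel\psi(\mathbf{a}_3)$ as a GDJ sequence in $m-1$ variables (using $-\frac{q}{2}\equiv\frac{q}{2}\pmod q$) is a cleaner shortcut than the paper's explicit cancellation; just note that your appeal to $\pi(1)\le m-2$ implicitly assumes $m\ge 3$, the case $m=2$ being a trivial direct check (a degenerate case the paper also glosses over).
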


\begin{proof}
\begin{enumerate}
\item
For $s=7,14$, when $0\leq i\leq L-1-\tau$, let $j=i+\tau$, then
\begin{eqnarray*}
A_s(i)-A_s(j)&=&{q\over 2}\sum_{k=1}^{m-1}(i_{\pi(k)}       i_{\pi(k+1)}-j_{\pi(k)}j_{\pi(k+1)})+\sum_{k=1}^{m}c_k(i_k-j_k),\\
B_s(i)-B_s(j)&=&A_s(i)
-A_s(j)
+\frac{q}{2}(i_{\pi(1)}-j_{\pi(1)}).
\end{eqnarray*}
Hence we have
\begin{eqnarray}\label{eq-3}
R_{\mathbf{A}_s}(\tau)+R_{\mathbf{B}_s}(\tau)
&=&\sum_{i=0}^{L-1-\tau}\left[\xi_q^{A_s(i)-A_s(j)}
+\xi_q^{B_s(i)-B_s(j)}\right]\nonumber\\
&=&\sum_{i=0}^{L-1-\tau}\xi_q^{A_s(i)-
A_s(j)}\left[1+(-1)^{i_{\pi(1)}-j_{\pi(1)}}\right]\nonumber\\
&=&2\sum_{i\in J(\tau)}\xi_q^{A_s(i)-A_s(j)},
\end{eqnarray}
where $J(\tau)=\{0\le i\le L-1-\tau: i_{\pi(1)}=j_{\pi(1)}\}$.

\begin{itemize}
\item
When $\tau=2^{m-1}$, since $j=i+2^{m-1}$, $\pi(m)=m$ and $\pi(m-1)=m-1$, it can be obtained that $j_{m}=1,i_m=0$, $i_{m-1}=j_{m-1}=s$ and $i_{\pi(t)}=j_{\pi(t)}$ for $1\leq t\leq m-2$, which results in $A_s(i)-A_s(j)=-{q\over 2}({s\over 7}-1)-c_m$ and $J(\tau)=\{0,1,\cdots,2^{m-2}-1\}$. Hence (\ref{eq-3}) can be reduced as
\begin{eqnarray*}
R_{\mathbf{A}_s}(\tau)+R_{\mathbf{B}_s}(\tau)
=\sum_{i=0}^{2^{m-2}-1}2\xi_q^{-{q\over 2}\left({s\over 7}-1\right)-c_m}=2^{m-1}\cdot \xi_q^{-{q\over 2}\left({s\over 7}-1\right)-c_m}.
\end{eqnarray*}

\item
When $\tau\in\{1,2,\cdots,L-1\}\setminus\{2^{m-1}\}$, for any $i\in J(\tau)$, let $t$ be the smallest integer in $\{1,2,...,m\}$ such that $i_{\pi(t)}\ne j_{\pi(t)}$, which implies $2\leq t< m$.
Let $i'$ and $j'$ be integers which are different from $i$ and $j$ in only one position $\pi(t-1)$, i.e., $i'_{\pi(t-1)}=1-i_{\pi(t-1)}$, respectively, and so $j'=i'+\tau$. Then we have
\begin{eqnarray*}
A_s(i)-A_s(j)-\left(A_s(i')
-A_s(j')\right) \equiv\frac{q}{2}~~(\bmod~q),
\end{eqnarray*}
which implies
$\xi_q^{A_s(i)-A_s(j)}=-\xi_q^{A_s(i')-A_s(j')}$. Hence (\ref{eq-3}) can be reduced to
\begin{eqnarray*}
R_{\mathbf{A}_s}(\tau)+R_{\mathbf{B}_s}(\tau)&=&
\sum_{i\in J(\tau)}\xi_q^{A_s(i)-A_s(j)}
+\sum_{i'\in J(\tau)}\xi_q^{A_s(i')-A_s(j')}=0.
\end{eqnarray*}
\end{itemize}
Combining these two cases, we have
\begin{eqnarray*}
R_{\mathbf{A}_s}(\tau)+R_{\mathbf{B}_s}(\tau)&=&
\left\{\begin{array}{ll}
\xi_q^{-{q\over 2}\left({s\over 7}-1\right)-c_m}\cdot 2H, &\tau=2H,\\
0, & \text{otherwise};
\end{array}\right.
\end{eqnarray*}
i.e., $\left(\mathbf{A}_s,\mathbf{B}_s\right)$ is an ACP.

\item
\begin{itemize}
\item
For $s=3$, and $0<\tau<2H-1$, we have $i_m=0$. Then
\begin{eqnarray*}
A_3(i) &=& \frac{q}{2}\sum_{k=1}^{m-2}i_{\pi(k)}i_{\pi(k+1)}+\sum_{k=1}^{m-1}c_ki_k+c, \\
B_3(i) &=& A_3(i)+ \frac{q}{2}i_{\pi(1)}.
\end{eqnarray*}
According to Lemma \ref{lem-GCP}, $(\mathbf{A}_3,\mathbf{B}_3)$ is a GCP of length $2H$.
%

\item
For $s=6,~0<\tau<2H-1$, let $j=i+\tau$ and  $H\leq i,j<3H$, we have $i_{m-1}=1,i_m=0$ or $i_{m-1}=0,i_m=1$. Then,
\begin{eqnarray}\label{eq-2-1}
A_6(i)-A_6(j)&=&
\frac{q}{2}\sum_{k=1}^{m-2}\left(i_{\pi(k)}i_{\pi(k+1)}-j_{\pi(k)}j_{\pi(k+1)}\right)+\sum_{k=1}^{m}c_k(i_k-j_k), \\
\label{eq-2-2}
B_6(i)-B_6(j)&=& A_6(i)-A_6(j)+ \frac{q}{2}\left(i_{\pi(1)}-j_{\pi(1)}\right).
\end{eqnarray}
If $i_{\pi(1)}\ne j_{\pi(1)}$, we have
$$\xi_q^{A_6(i)-A_6(j)}=-\xi_q^{B_6(i)-B_6(j)}.$$
If $i_{\pi(1)}=j_{\pi(1)}$, let $t$ be the smallest integer such that $i_{\pi(t)}\ne j_{\pi(t)}$.  Let $i'$ and $j'$ be integers which are different from $i$ and $j$ in only one position $\pi(t-1)$, i.e., $i'_{\pi(t-1)}=1-i_{\pi(t-1)}$, respectively, and so $j'=i'+\tau$. Hence we have
\begin{eqnarray*}
\left(B_6(i)-B_6(j)\right)-\left(B_6(i')-B_6(j')\right)&=&
\left(A_6(i)-A_6(j)\right)-\left(A_6(i')-A_6(j')\right)\\
&\equiv&\frac{q}{2}
~~(\bmod~q),
\end{eqnarray*}
and then
\begin{eqnarray*}
\sum_{i=H}^{3H-1-u}\left(\xi_q^{A_6(i)-A_6(j)}+\xi_q^{B_6(i)-B_6(j)}\right)=0.
\end{eqnarray*}

The proof for $s=12$ is similar to the case for $s=3$, and hence we omit it here.
\end{itemize}

\item
For $0\leq i<H=2^{m-2}$, we have $i_{m-1}=0$ and $i_m=0$. Then,
\begin{eqnarray*}
A_1(i) &=& \frac{q}{2}\sum_{k=1}^{m-3}i_{\pi(k)}i_{\pi(k+1)}
+\sum_{k=1}^{m-2}c_ki_k+c, \\
B_1(i) &=& A_1(i)+ \frac{q}{2}i_{\pi(1)}.
\end{eqnarray*}
According to Lemma \ref{lem-GCP}, $(\mathbf{A}_1,\mathbf{B}_1)$ is a GCP.

The proof of $s=2,4,8$ is similar with that of $s=1$, we omit it here.

\end{enumerate}
\end{proof}

\begin{Remark}
Actually, when $q=2$, $(\mathbf{A}_7, \mathbf{B}_{7})$ and $(\mathbf{A}_{14},\mathbf{B}_{14})$ are two binary Z-complementary pairs given in \cite{ChenCY-SPL-17}.
\end{Remark}

According to Lemma \ref{lem-bound}, Remark \ref{re-gdjpmepr} and Eq. (\ref{eq-pmeprN}), we can get an upper bound on the PMEPR of $\mathbf{A}_s$ in Theorem \ref{thm-acp1}.
\begin{Corollary}
With the same notations as Theorem \ref{thm-acp1}, we have
\begin{eqnarray*}
\hbox{PMEPR}(\mathbf{A}_s)&\leq&
\left\{\begin{array}{ll}
\frac{10}{3}, & s=7,14,\\
2, &s=1,2,3,4,6,8,12,15.\\
\end{array}\right.
\end{eqnarray*}
It is clear that
\begin{eqnarray*}
\hbox{PMEPR}_\mathcal{C}(\mathbf{a})\leq \frac{10}{3}.
\end{eqnarray*}
\end{Corollary}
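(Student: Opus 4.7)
The plan is to plug the three structural facts from Theorem \ref{thm-acp1} into the general PMEPR upper bound in Eq. (\ref{eq-pmeprN}), applied with $N=2$ to each pair $(\mathbf{A}_s,\mathbf{B}_s)$. Since $\mathbf{A}_s$ and $\mathbf{B}_s$ come from the same 0/1 pattern $\mathbf{i}$ applied to the unit-modulus sequences $\psi(\mathbf{a})$ and $\psi(\mathbf{b})$, they have identical energy $R_{\mathbf{A}_s}(0)=R_{\mathbf{B}_s}(0)=wH$, where $w$ is the Hamming weight of the binary expansion of $s$. This equal-energy observation is what lets me invoke Lemma \ref{lem-bound} cleanly.

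For the easy indices $s\in\{1,2,3,4,6,8,12\}$, part (2)--(3) of Theorem \ref{thm-acp1} says $(\mathbf{A}_s,\mathbf{B}_s)$ is a GCP; this is a size-2 CSS of equal-energy constituents, so Lemma \ref{lem-bound} immediately yields $\hbox{PMEPR}(\mathbf{A}_s)\le 2$. For $s=15$, $\mathbf{A}_{15}=\psi(\mathbf{a})$ is itself a GDJ sequence, so Remark \ref{re-gdjpmepr} gives $\hbox{PMEPR}(\mathbf{A}_{15})\le 2$ directly.

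The main (and only nontrivial) step is $s\in\{7,14\}$, because the pair $(\mathbf{A}_s,\mathbf{B}_s)$ is only an ACP, not a GCP, so Lemma \ref{lem-bound} does not apply and I must go back to Eq. (\ref{eq-pmeprN}) by hand. Here both $\mathbf{A}_s$ and $\mathbf{B}_s$ have weight $w=3$, hence $R_{\mathbf{A}_s}(0)=R_{\mathbf{B}_s}(0)=3H$. Part (1) of Theorem \ref{thm-acp1} tells me that $R_{\mathbf{A}_s}(\tau)+R_{\mathbf{B}_s}(\tau)$ vanishes for every $\tau\ge 1$ except $\tau=2H$, where its modulus equals $2H$ (the unimodular phase factor $\xi_q^{-\frac{q}{2}(s/7-1)-c_m}$ drops out under $|\cdot|$). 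Substituting into
\begin{eqnarray*}
\hbox{PMEPR}(\mathbf{A}_s)\le \frac{1}{R_{\mathbf{A}_s}(0)}\Bigl(R_{\mathbf{A}_s}(0)+R_{\mathbf{B}_s}(0)+2\sum_{\tau=1}^{L-1}\bigl|R_{\mathbf{A}_s}(\tau)+R_{\mathbf{B}_s}(\tau)\bigr|\Bigr)
\end{eqnarray*}
gives $\hbox{PMEPR}(\mathbf{A}_s)\le \frac{1}{3H}\bigl(3H+3H+2\cdot 2H\bigr)=\frac{10}{3}$, which is the advertised constant. The inequality $\hbox{PMEPR}_{\mathcal{C}}(\mathbf{a})\le \frac{10}{3}$ then follows at once by taking the maximum of the ten bounds over $s\in\{1,2,3,4,6,7,8,12,14,15\}$.

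The only place where I expect to have to be careful is reading off the correct energy $R_{\mathbf{A}_s}(0)=3H$ (not $L=4H$) for the weight-3 patterns $s=7,14$; this is what makes the ratio $6H/3H=2$ from the constant term and the extra contribution $2\cdot 2H/3H=4/3$ from the single nonzero correlation combine to exactly $10/3$. Everything else is a direct citation of Theorem \ref{thm-acp1}, Lemma \ref{lem-bound}, Remark \ref{re-gdjpmepr}, and Eq. (\ref{eq-pmeprN}).
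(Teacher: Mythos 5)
Your proposal is correct and follows essentially the same route the paper intends: Lemma \ref{lem-bound} (with equal-energy GCP constituents) for $s=1,2,3,4,6,8,12$, Remark \ref{re-gdjpmepr} for $s=15$, and a direct substitution of the ACP correlation profile from Theorem \ref{thm-acp1} into Eq. (\ref{eq-pmeprN}) with $R_{\mathbf{A}_s}(0)=3H$ to get $(6H+4H)/3H=\frac{10}{3}$ for $s=7,14$. The paper gives no further detail beyond citing these three ingredients, so your computation is exactly the omitted argument.
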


\begin{Example}\label{eg-1}
For $q=2$, let $\pi$ be the identical permutation of $\{1,2,...,m\}$, and $a(\mathbf{x})=\sum_{k=1}^{m-1}x_{k}x_{k+1}$. Table \ref{table-eg-1} shows the PMEPR properties of the contiguous subsequences of $\mathbf{a}$ for various $s$ and $m$. It can be observed that $\hbox{PMEPR}_\mathcal{C}(\mathbf{a})\leq\frac{10}{3}$.

\begin{table}[htbp]
  \centering
    \caption{PMEPR comparison for the contiguous subsequences of $\mathbf{a}$ in Example \ref{eg-1} for $m=3,4,5,6$}\label{table-eg-1}
  {\small{\begin{tabular}{|c|c|c|c|c|c|c|c|c|c|c|}
  \hline
 \diagbox{$L=2^m$}{PMEPR}{$\mathbf{A}_s$}  &$\mathbf{A}_1$ & $\mathbf{A}_2$ & $\mathbf{A}_3$ & $\mathbf{A}_4$ & $\mathbf{A}_6$ & $\mathbf{A}_7$ & $\mathbf{A}_8$ & $\mathbf{A}_{12}$&$\mathbf{A}_{14}$&$\mathbf{A}_{15}$\\\hline
 8&2.0000 &2.0000 &1.7071 & 2.0000 & 1.7071 &2.6667&2.0000&1.7071&3.3166& 2.0000\\
  \hline
  16&1.7071 & 1.7071 &2.0000 & 1.7071 &2.0000 &3.0000&1.7071&2.0000&1.8844& 1.7071\\
  \hline
   32&2.0000 & 2.0000 &1.8210 &2.0000 & 1.8210&3.1910&2.0000&1.8210&3.3274& 2.0000\\
  \hline
   64&1.8210 & 1.8210 &2.0000 & 1.8210 & 2.0000 &3.1910&1.8210&2.0000&2.9419&1.8210\\
  \hline
\end{tabular}
}}
\end{table}

\end{Example}


\subsection{Preamble Sequences with PMEPRs of Contiguous Subsequences Upper Bounded by  4}
\begin{Theorem}\label{thm-gcs1}
Let $m\geq 2$ be a positive integer and $q$ be an even integer. Let
\begin{eqnarray*}
a(\mathbf{x}) &=& \frac{q}{2}\sum_{k=1}^{m-1}x_{\pi(k)}x_{\pi(k+1)}+\sum_{k=1}^mc_kx_k+c, \\
b(\mathbf{x}) &=& a(\mathbf{x})+ \frac{q}{2}x_{\pi(1)},\\
 d(\mathbf{x}) &=& a(\mathbf{x})+ \frac{q}{2}x_{m-1}, \\
e(\mathbf{x}) &=& a(\mathbf{x})+ \frac{q}{2}x_{\pi(1)}+ \frac{q}{2}x_{m-1}.
\end{eqnarray*}
where $\mathbf{x}\in\mathbb{Z}_2^m,c_k,c\in\mathbb{Z}_q$. Let $H=2^{m-2}$ and  $\pi$ is a permutation of $\{1,2,\cdots,m\}$ satisfying $\pi(m)=m-1$ and $\pi(m-1)=m$,
then the GDJ sequences have the following properties:
\begin{enumerate}
\item for $s=7,14$, $(\mathbf{A}_s,\mathbf{B}_s)$ is an ACP, i.e.,
\begin{eqnarray*}
R_{\mathbf{A}_s}(\tau)+R_{\mathbf{B}_s}(\tau)&=&
\left\{\begin{array}{ll}
6H,&\tau=0,\\
\xi_q^{-{q\over 2}\left({s\over 7}-1\right)-c_{m-1}}\cdot 2H, &\tau=H,\\
0, & \text{otherwise};
\end{array}\right.
\end{eqnarray*}
\item  for $s=3,12$, $(\mathbf{A}_s,\mathbf{B}_s,\mathbf{D}_s,\mathbf{E}_s)$ is a CSS;
\item   for $s=6$, $(\mathbf{A}_s,\mathbf{B}_s)$ is a GCP;
\item   for $s=1,2,4,8$, $(\mathbf{A}_s,\mathbf{B}_s)$ is a GCP.
\end{enumerate}
\end{Theorem}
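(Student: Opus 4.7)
The plan is to adapt the proof of Theorem \ref{thm-acp1} to the new convention $\pi(m-1)=m$ and $\pi(m)=m-1$, under which the chain endpoint of the quadratic form is $x_{m-1}$ instead of $x_m$. This swap has two consequences: the ACP spike in part 1 moves from $\tau=2H$ to $\tau=H$, and for $s\in\{3,12\}$ the GCP property is replaced by a CSS-of-size-$4$ property, which is why the auxiliary sequences $\mathbf{d}$ and $\mathbf{e}$ must be introduced.

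For parts 1, 3 and 4 I would recycle Theorem \ref{thm-acp1}'s argument. Parts 3 and 4 are direct reductions to Lemma \ref{lem-GCP}: for $s\in\{1,2,4,8\}$ (single RB) both $x_{m-1}$ and $x_m$ are frozen, so the chain terms at $k=m-2,m-1$ collapse into linear/constant offsets and the remaining quadratic is a GDJ on $\{x_1,\dots,x_{m-2}\}$; for $s=6$ (RB2+RB3) the constraint $x_{m-1}+x_m=1$ kills the top chain term $x_m x_{m-1}$, and the identity
\begin{equation*}
R_{\mathbf{A}_s}(\tau)+R_{\mathbf{B}_s}(\tau)=2\sum_{i\in J(\tau)}\xi_q^{A_s(i)-A_s(j)},\qquad j=i+\tau,
\end{equation*}
is emptied by the same pair-swap as in Theorem \ref{thm-acp1} (flip bit $\pi(t-1)$, where $t$ is the smallest index with $i_{\pi(t)}\neq j_{\pi(t)}$). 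A short argument using $i_{m-1}+i_m=j_{m-1}+j_m=1$ shows $t\leq m-1$, so the partner $i'$ stays in $\mathrm{supp}(\mathbf{A}_6)$, and a mod-$q$ computation gives $A_s(i')-A_s(j')\equiv A_s(i)-A_s(j)+q/2\pmod q$, so all terms cancel. Part 1 applies the same template; the only exception is at $\tau=H$, where for the ``no-carry'' sub-block ($i\in[0,H)$ for $s=7$, $i\in[2H,3H)$ for $s=14$) one has $t=m$, and the forced flip of $\pi(m-1)=m$ sends the partner $j'$ outside $\mathrm{supp}(\mathbf{A}_s)$; direct evaluation there yields the announced spike $\xi_q^{-\frac{q}{2}(s/7-1)-c_{m-1}}\cdot 2H$.

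Part 2 is the new ingredient. Under the restriction $x_m=\epsilon\in\{0,1\}$ the quadratic does \emph{not} reduce to a GDJ on $m-1$ variables, because $x_m=x_{\pi(m-1)}$ now sits in the interior of the chain; it collapses instead to $\frac{q}{2}\sum_{k=1}^{m-3}x_{\pi(k)}x_{\pi(k+1)}+\frac{q}{2}\epsilon\, x_{\pi(m-2)}$, a genuine GDJ on $\{x_1,\dots,x_{m-2}\}$ plus a completely free linear term in $x_{m-1}$. Splitting further by $x_{m-1}\in\{0,1\}$ I would write
\begin{equation*}
\mathbf{A}_s=\mathbf{g}\parallel\alpha\mathbf{g},\quad \mathbf{B}_s=\mathbf{g}'\parallel\alpha\mathbf{g}',\quad \mathbf{D}_s=\mathbf{g}\parallel(-\alpha)\mathbf{g},\quad \mathbf{E}_s=\mathbf{g}'\parallel(-\alpha)\mathbf{g}',
\end{equation*}
where $\alpha=\xi_q^{c_{m-1}}$ and $(\mathbf{g},\mathbf{g}')$ is the length-$H$ GCP produced by applying Lemma \ref{lem-GCP} to the reduced chain on $m-2$ variables. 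Summing the four aperiodic autocorrelations and splitting each into within-first-half, within-second-half and cross-boundary pieces, the $\pm\alpha$ sign flip between $\mathbf{A}_s$ and $\mathbf{D}_s$ (and between $\mathbf{B}_s$ and $\mathbf{E}_s$) annihilates the cross-boundary contributions pairwise, while the within-half contributions assemble into $4(R_\mathbf{g}(\tau)+R_{\mathbf{g}'}(\tau))=0$ for $\tau>0$ by the GCP property, yielding the claimed CSS.

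The most delicate step, which I expect to be the main technical obstacle, is the ``carry'' sub-block in part 1: when $i_{m-1}=1$ and one adds $\tau=H$, the carry propagates into $i_m=\pi(m-1)$ and reinstates the chain term $x_{\pi(m-2)}x_m$ inside the exponent. I would need to verify that after summing over $x_{\pi(m-2)}\in\{0,1\}$ this reinstated contribution averages to zero, so that exactly one spike survives at $\tau=H$ and the ACP closes up cleanly at every other non-zero shift.
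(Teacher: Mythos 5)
Your proposal is correct, and for parts 1, 3 and 4 it is essentially the paper's own argument: the same expansion of $R_{\mathbf{A}_s}(\tau)+R_{\mathbf{B}_s}(\tau)$ with the factor $1+(-1)^{i_{\pi(1)}-j_{\pi(1)}}$, the same bit-flip pairing at position $\pi(t-1)$ (with the observation that $t=m$ forces $\tau=H$, so the flipped bit stays among $x_1,\dots,x_{m-2}$ and the partner remains in the support), and direct evaluation of the three sub-blocks at $\tau=H$. The ``delicate'' carry sub-block you flag at the end does close up exactly as you suspect, and it is precisely the computation the paper performs: for $i_{m-1}=1$, $i_m=0$ one gets $A_s(i)-A_s(j)=-\frac{q}{2}j_{\pi(m-2)}+c_{m-1}-c_m$, and summing over the block the two values of $j_{\pi(m-2)}$ occur equally often, so this block contributes zero and only the no-carry block produces the spike $2H\xi_q^{-\frac{q}{2}(s/7-1)-c_{m-1}}$; your pair-swap over bit $\pi(m-2)$ is the same cancellation in disguise, and the same mechanism is what the paper uses for $s=6$ at $\tau=H$. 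Where you genuinely diverge is part 2: the paper disposes of $s=3,12$ in one line by citing Theorem 12 of Paterson, whereas you prove the size-$4$ CSS directly by restricting $x_m=\epsilon$, observing that the interior chain variable $x_m$ collapses to linear terms so that $\mathbf{A}_s=\mathbf{g}\,\Vert\,\alpha\mathbf{g}$, $\mathbf{B}_s=\mathbf{g}'\Vert\alpha\mathbf{g}'$, $\mathbf{D}_s=\mathbf{g}\,\Vert(-\alpha)\mathbf{g}$, $\mathbf{E}_s=\mathbf{g}'\Vert(-\alpha)\mathbf{g}'$ with $(\mathbf{g},\mathbf{g}')$ a length-$H$ GCP from Lemma \ref{lem-GCP}, and then cancelling the cross-boundary correlations by the $\pm\alpha$ sign flip while the within-half parts sum to $4(R_{\mathbf{g}}+R_{\mathbf{g}'})(\tau)=0$. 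This computation is sound (only note that $\alpha=\xi_q^{c_{m-1}}$ for $s=3$ but $\alpha=\xi_q^{c_{m-1}+q/2}$ for $s=12$, which is immaterial, and that $m\geq 3$ is implicitly assumed so that the reduced chain is nonempty); it is longer than the paper's citation but self-contained, and it makes transparent exactly why the auxiliary sequences $\mathbf{d},\mathbf{e}$ are needed for these two values of $s$, which the paper's appeal to Paterson's general theorem leaves hidden.
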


\begin{proof}
Let $A_s(\mathbf{x})$ $(s=1,2,3,4,6,7,8,12,14)$ be the generalized Boolean function corresponding to the sequence $\mathbf{A}_s$.

\begin{enumerate}
\item
For $s=7,14$, when $0\leq i\leq L-1-\tau$, let $j=i+\tau$, then
\begin{eqnarray*}
A_s(i)-A_s(j)
&=&{q\over 2}\sum_{k=1}^{m-1}\left(i_{\pi(k)}       i_{\pi(k+1)}-j_{\pi(k)}j_{\pi(k+1)}\right)+\sum_{k=1}^{m}c_k\left(i_k-j_k\right),\\
B_s(i)-B_s(j)&=&A_s(i)
-A_s(j)
+\frac{q}{2}\left(i_{\pi(1)}-j_{\pi(1)}\right).
\end{eqnarray*}
Hence we have
\begin{eqnarray}\label{eq-a1b1}
R_{\mathbf{A}_s}(\tau)+R_{\mathbf{B}_s}(\tau)
&=&\sum_{i=0}^{L-1-\tau}\left[\xi_q^{A_s(i)-
A_s(j)}+\xi_q^{B_s(i)-B_s(j)}\right]\nonumber\\
&=&\sum_{i=0}^{L-1-\tau}\xi_q^{A_s(i)-
A_s(j)}\left[1+(-1)^{i_{\pi(1)}-j_{\pi(1)}}\right].
\end{eqnarray}

\begin{itemize}
\item
When $\tau=2^{m-2}$, since $j=i+2^{m-2}$, $\pi(m)=m-1$ and $\pi(m-1)=m$, it can be obtained that
\begin{eqnarray*}
\left\{\begin{array}{ll}
i_{m}=0,i_{m-1}=0,j_m=0,j_{m-1}=1, & 0\le i\le 2^{m-2}-1,\\
i_{m}=0,i_{m-1}=1,j_m=1,j_{m-1}=0, & 2^{m-2}\le i\le 2^{m-1}-1,\\
i_{m}=1,i_{m-1}=0,j_m=1,j_{m-1}=1, & 2^{m-1}\le i\le 2^{m-1}+2^{m-2}-1,\\
\end{array}\right.
\end{eqnarray*}
and $i_{\pi(t)}=j_{\pi(t)}$ for $1\leq t\leq m-2$, which leads to \begin{eqnarray*}
a(i)-a(j)&=&
\left\{\begin{array}{ll}
-c_{m-1}, & 0\le i\le 2^{m-2}-1,\\
-{q\over 2}j_{\pi(m-2)}-c_m+c_{m-1}, & 2^{m-2}\le i\le 2^{m-1}-1,\\
-{q\over 2}-c_{m-1}, & 2^{m-1}\le i\le 2^{m-1}+2^{m-2}-1,\\
\end{array}\right.
\end{eqnarray*}
where $A_s(i)=a(i+H\cdot s)$.
Hence Eq. (\ref{eq-a1b1}) can be reduced to
\begin{eqnarray*}
R_{\mathbf{A}_s}(\tau)+R_{\mathbf{B}_s}(\tau)
&=&2\sum_{i=0}^{2^{m-1}-1}\xi_q^{A_s(i)-A_s(j)}\\
&=&2\sum_{i=0}^{2^{m-2}-1}\xi_q^{A_s(i)-A_s(j)}+2\sum_{i=2^{m-2}}^{2^{m-1}-1}
\xi_q^{A_s(i)-A_s(j)}\\
&=&2H\cdot \xi_q^{-{q\over 2}\left({s\over 7}-1\right)-c_{m-1}}.
\end{eqnarray*}

\item When $\tau\in\{1,2,\cdots,L-1\}\setminus\{2^{m-2}\},0\leq i,j\leq L-1$, Eq. (\ref{eq-a1b1}) is equal to
\begin{eqnarray}\label{eq-a1b1-1}
R_{\mathbf{A}_7}(\tau)+R_{\mathbf{B}_7}(\tau)&=&2\sum_{i\in J(\tau)}\xi_q^{A_7(i)-A_7(j)},
\end{eqnarray}
where $J(\tau)=\{0\le i\le L-1-\tau: i_{\pi(1)}=j_{\pi(1)}\}$. For any $i\in J(\tau)$, let $t$ be the smallest integer such that $i_{\pi(t)}\ne j_{\pi(t)}$. which implies $2\leq t\leq m$.
Let $i'$ and $j'$ be integers which are different from $i$ and $j$ in only one position $\pi(t-1)$, i.e., $i'_{\pi(t-1)}=1-i_{\pi(t-1)}$, respectively, and so $j'=i'+\tau$. Then we have
\begin{eqnarray*}
A_7(i)-A_7(j)-(A_7(i')-A_7(j')) \equiv\frac{q}{2}~~(\bmod~q),
\end{eqnarray*}
which implies
$\xi_q^{A_7(i)-A_7(j)}=(-1)\xi_q^{A_7(i')-A_7(j')}$. Hence (\ref{eq-a1b1-1}) is equal to
\begin{eqnarray*}
R_{\mathbf{A}_7}(\tau)+R_{\mathbf{B}_7}(\tau)&=&\sum_{i\in J(\tau)}\xi_q^{A_7(i)-A_7(j)}
+\sum_{i'\in J(\tau)}\xi_q^{A_7(i')-A_7(j')}=0.
\end{eqnarray*}
In a similar way, one can prove that $(\mathbf{A}_{14},\mathbf{B}_{14})$ is an ACP.
\end{itemize}
Combining these two cases, we have the result of 1).
\item
According to Theorem 12 of \cite{Paterson-00}, it can be easily obtained that $(\mathbf{A}_s,\mathbf{B}_s,\mathbf{D}_s,\mathbf{E}_s)$ is a CSS for $s=3,12$, which implies the result of 2).
\item
For $0<\tau\leq2H-1$, let $j=i+\tau$ and  $H\leq i,j<3H$, we have $i_{m-1}=1,i_m=0$ or $i_{m-1}=0,i_m=1$. Then, we have
\begin{eqnarray}\label{eq-3-1}
        A_6(i)-A_6(j) &=& \frac{q}{2}\sum_{k=1}^{m-2}\left(i_{\pi(k)}
        i_{\pi(k+1)}-j_{\pi(k)}j_{\pi(k+1)}\right)+
        \sum_{k=1}^{m}c_k\left(i_k-j_k\right), \\
        \label{eq-3-2}
       B_6(i)-B_6(j) &=& A_6(i)-A_6(j)+\frac{q}{2}\left(i_{\pi(1)}-j_{\pi(1)}\right).
      \end{eqnarray}
If $i_{\pi(1)}\ne j_{\pi(1)}$, then
$$\xi_q^{A_6(i)-A_6(j)}=(-1)\xi_q^{B_6(i)-B_6(j)}. $$

If $i_{\pi(1)}=j_{\pi(1)}$ and $\tau=H$, which implies $H\leq i<2H$ and $i_{m-1}=1,i_m=0,j_{m-1}=0,j_m=1$, then
$$B_6(i)-B_6(j)=A_6(i)-A_6(j)=-\frac{q}{2}j_{\pi(m-2)}+c_{m-1}-c_m,$$
and then
$$\sum_{i=H}^{3H-1-\tau}\xi_q^{B_6(i)-B_6(j)}=
\sum_{i=H}^{3H-1-\tau}\xi_q^{A_6(i)-A_6(j)}=
\sum_{i=H}^{2H-1}\xi_q^{-\frac{q}{2}j_{\pi(m-2)}+c_{m-1}-c_m}=0.$$

If $i_{\pi(1)}=j_{\pi(1)}$ and $\tau\ne H$, let $t$ be the smallest integer such that $i_{\pi(t)}\ne j_{\pi(t)}$.  Let $i'$ and $j'$ be integers which are different from $i$ and $j$ in only one position $\pi(t-1)$, i.e., $i'_{\pi(t-1)}=1-i_{\pi(t-1)}$, respectively, and so $j'=i'+\tau$. Then, we have $t\leq m-1$. Otherwise, since $i_{\pi(k)}=j_{\pi(k)}$ for $k\in\{1,2,...,m-1\}$ which implies that $j=i+2^{m-2},\tau=2^{m-2}$, it contradicts the assumption. Therefore, we have
      $H\leq i',j'\leq 3H-1$. According to (\ref{eq-3-1}) and (\ref{eq-3-2}), we have
      \begin{eqnarray*}
     \left(B_6(i)-B_6(j)\right)-\left(B_6(i')-B_6(j')\right)=\left(A_6(i)-A_6(j)\right)
     -\left(A_6(i')-A_6(j')\right)\equiv\frac{q}{2}
     ~~(\bmod~q).
      \end{eqnarray*}
and then
      \begin{eqnarray*}
        \sum_{i=H}^{3H-1-\tau}\left(\xi_q^{A_6(i)-A_6(j)}+
        \xi_q^{B_6(i)-B_6(j)}\right)=0.
      \end{eqnarray*}
i.e., $(\mathbf{A}_6,\mathbf{B}_6)$ is a GCP.
\item
The proof of it is similar with 3) of Theorem \ref{thm-acp1}, so we omit it here.
\end{enumerate}
\end{proof}

\begin{Remark}
 $(\mathbf{A}_7,\mathbf{B}_{7})$  and $(\mathbf{A}_{14},\mathbf{B}_{14})$ in Theorem \ref{thm-gcs1} are two new constructions of ACPs of length $2^{m-1}+2^{m-2}$.
\end{Remark}

According to Lemma \ref{lem-bound}, Remark \ref{re-gdjpmepr} and Eq. (\ref{eq-pmeprN}), we can get an upper bound on the PMEPR of $\mathbf{A}_s$ in Theorem \ref{thm-gcs1}.
\begin{Corollary}
With the same notations as Theorem \ref{thm-gcs1}, we have
\begin{eqnarray*}
\hbox{PMEPR}(\mathbf{A}_s)&\leq&
\left\{\begin{array}{ll}
\frac{10}{3}, & s=7,14,\\
4, &s=3,12,\\
2, &s=1,2,4,6,8,15.\\
\end{array}\right.
\end{eqnarray*}
It is clear that
\begin{eqnarray*}
\hbox{PMEPR}_\mathcal{C}(\mathbf{a})\leq 4.
\end{eqnarray*}
\end{Corollary}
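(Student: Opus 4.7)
The plan is to combine the structural facts already proven in Theorem~\ref{thm-gcs1} with the general PMEPR bound~(\ref{eq-pmeprN}) and Lemma~\ref{lem-bound}. For each contiguous index $s\in\{1,2,3,4,6,7,8,12,14,15\}$, I would identify the smallest (almost-)complementary structure to which $\mathbf{A}_s$ belongs and then read off the corresponding numerical upper bound. Since every entry of a GDJ sequence lies on the unit circle, every subsequence $\mathbf{A}_s, \mathbf{B}_s, \mathbf{D}_s, \mathbf{E}_s$ has energy equal to its length, which legitimizes the equal-energy hypothesis of Lemma~\ref{lem-bound}.

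The easy indices dispose of themselves. For $s\in\{1,2,4,6,8\}$, items~3) and 4) of Theorem~\ref{thm-gcs1} exhibit $(\mathbf{A}_s,\mathbf{B}_s)$ as a GCP, so Lemma~\ref{lem-bound} immediately yields $\hbox{PMEPR}(\mathbf{A}_s)\leq 2$; and for $s=15$, $\mathbf{A}_{15}=\psi(\mathbf{a})$ is itself the GDJ sequence, so Remark~\ref{re-gdjpmepr} gives $\hbox{PMEPR}(\mathbf{A}_{15})\leq 2$. For $s\in\{3,12\}$, item~2) states that $(\mathbf{A}_s,\mathbf{B}_s,\mathbf{D}_s,\mathbf{E}_s)$ is a CSS of size~$4$; all four sequences have length $2H$ and unit-modulus entries, hence common energy $2H$, and Lemma~\ref{lem-bound} gives $\hbox{PMEPR}(\mathbf{A}_s)\leq 4$.

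The real work is for $s\in\{7,14\}$, where $(\mathbf{A}_s,\mathbf{B}_s)$ is only an \emph{almost} complementary pair: the surviving sidelobe at $\tau=H$ prevents a direct appeal to Lemma~\ref{lem-bound}, so I would instead apply~(\ref{eq-pmeprN}) with $N=2$. Both $\mathbf{A}_s$ and $\mathbf{B}_s$ have length $3H$ and unit-modulus entries, so $R_{\mathbf{A}_s}(0)=R_{\mathbf{B}_s}(0)=3H$. By item~1), $R_{\mathbf{A}_s}(\tau)+R_{\mathbf{B}_s}(\tau)$ vanishes for all $\tau\in\{1,\dots,L-1\}\setminus\{H\}$, and at $\tau=H$ it equals a unit-modulus phase times $2H$, so its absolute value is $2H$. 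Substituting into~(\ref{eq-pmeprN}) yields
\begin{eqnarray*}
\hbox{PMEPR}(\mathbf{A}_s) &\leq& \frac{1}{3H}\bigl(6H+2\cdot 2H\bigr) \;=\; \frac{10}{3}.
\end{eqnarray*}

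The main subtlety in this last step is precisely to recognize that the $\xi_q^{-(q/2)(s/7-1)-c_{m-1}}$ factor in the ACP identity, being of modulus one, contributes only $2H$ to the sidelobe sum, so that the numerator is $6H+4H=10H$ and no dependence on $c_{m-1}$ or $s$ remains. Assembling the three cases gives $\hbox{PMEPR}_{\mathcal{C}}(\mathbf{a})=\max_{s}\hbox{PMEPR}(\mathbf{A}_s)\leq\max\{2,\,10/3,\,4\}=4$, which is the claimed bound.
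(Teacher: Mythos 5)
Your proposal is correct and follows essentially the same route as the paper: GCPs and the size-4 CSSs are handled by Lemma~\ref{lem-bound}, $\mathbf{A}_{15}$ by Remark~\ref{re-gdjpmepr}, and the ACP cases $s=7,14$ by plugging the sidelobe value $2H$ at $\tau=H$ into Eq.~(\ref{eq-pmeprN}) with $N=2$ to get $(6H+4H)/(3H)=10/3$. The only cosmetic remark is that $\mathbf{A}_7,\mathbf{B}_7$ are formally length-$4H$ vectors with a zero block, but since their energy is $3H$ your computation is exactly the intended one.
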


\begin{Example}\label{eg-2}
For $q=2$, let $\pi$ be a permutation of symbols $\{1,2,...,m\}$ with $\pi(k)=k$ for $1\leq k\leq m-2$ and $\pi(m-1)=m,\pi(m)=m-1$, and $a(\mathbf{x})=\sum_{k=1}^{m-1}x_{\pi(k)}
x_{\pi(k+1)}$. Table \ref{table-eg-2} shows the PMEPR properties of the contiguous subsequences of $\mathbf{a}$ for various $s$ and $m$. It can be observed that $\hbox{PMEPR}_\mathcal{C}(\mathbf{a})\leq4$, implying that  the PMEPRs of the contiguous subsequences in Example \ref{eg-1} are lower than those in Example \ref{eg-2}.
\begin{table}[htbp]
  \centering
    \caption{PMEPR comparison for the contiguous subsequences of $\mathbf{a}$ in Example \ref{eg-2} for $m=3,4,5,6$}\label{table-eg-2}
    {\small{
  \begin{tabular}{|c|c|c|c|c|c|c|c|c|c|c|}
  \hline
 \diagbox{$L=2^m$}{PMEPR}{$\mathbf{A}_s$}  &$\mathbf{A}_1$ & $\mathbf{A}_2$ & $\mathbf{A}_3$ & $\mathbf{A}_4$ & $\mathbf{A}_6$ & $\mathbf{A}_7$ & $\mathbf{A}_8$ & $\mathbf{A}_{12}$&$\mathbf{A}_{14}$&$\mathbf{A}_{15}$\\\hline
 8&2.0000 &2.0000 &4.0000 & 2.0000 & 1.7071 &2.6667&2.0000&2.0000& 1.6667& 2.0000\\
  \hline
   16&1.7071 & 1.7071 & 2.0000 & 1.7071 &2.0000 &3.0000&1.7071&3.4142 &3.3166& 1.7071
\\
  \hline
   32&2.0000 & 2.0000 &4.0000 &2.0000 & 1.8210&2.6667&2.0000&3.3066& 1.9369&2.0000\\
  \hline
   64&1.8210 & 1.8210 &3.4142 & 1.8210 & 2.0000 &3.1910&1.8210&3.6419&3.2424&1.8210\\
  \hline
\end{tabular}
}}
\end{table}

\end{Example}

Only the contiguous subsequences are considered in Theorems \ref{thm-acp1} and \ref{thm-gcs1}, while the non-contiguous frequency bands allocation also be widely used in OFDMA systems.
\section{The PMEPR Properties of Proposed Preamble Sequences for Non-contiguous Frequency bands Allocation}

In this section, we show the upper bounds of the PMEPR properties of the non-contiguous subsequences in Theorem \ref{thm-acp1} and  \ref{thm-gcs1} for the OFDMA system of non-contiguous DSA.

\begin{Theorem}\label{thm-non-acp1}
  With the same notations as Theorem \ref{thm-acp1}, let
\begin{eqnarray*}
 d(\mathbf{x}) &=& a(\mathbf{x})+ \frac{q}{2}x_{m}, \\
e(\mathbf{x}) &=& a(\mathbf{x})+ \frac{q}{2}x_{\pi(1)}+ \frac{q}{2}x_{m}.
\end{eqnarray*}
For $s=5,9,10,11,13$, $(\mathbf{A}_s,\mathbf{B}_s)$ has the following properties:
  \begin{enumerate}
    \item $(\mathbf{A}_{11},\mathbf{B}_{11})$ and $(\mathbf{A}_{13},\mathbf{B}_{13})$ are ACPs, i.e., for $s=11,13$,
\begin{eqnarray*}
|R_{\mathbf{A}_s}(\tau)+R_{\mathbf{B}_s}(\tau)|&=&
\left\{\begin{array}{ll}
6H, & \tau=0,\\
 2H, &\tau=2H,\\
0, & \text{otherwise}.
\end{array}\right.
\end{eqnarray*}
    \item $(\mathbf{A}_{9},\mathbf{B}_{9})$ is a GCP.
\item Both $(\mathbf{A}_5,\mathbf{B}_5,\mathbf{D}_5,\mathbf{E}_5)$ and $(\mathbf{A}_{10},\mathbf{B}_{10},\mathbf{D}_{10},\mathbf{E}_{10})$ are CSSs.
  \end{enumerate}
\end{Theorem}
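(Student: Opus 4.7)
I would mirror the two-stage strategy of Theorems~\ref{thm-acp1} and~\ref{thm-gcs1}. The first stage expands the target sums. Using
\[
B_s(i)-B_s(j)\equiv A_s(i)-A_s(j)+\tfrac{q}{2}\bigl(i_{\pi(1)}-j_{\pi(1)}\bigr)\pmod q,
\]
the analogous identity $D_s(i)-D_s(j)\equiv A_s(i)-A_s(j)+\tfrac{q}{2}(i_m-j_m)\pmod q$, and the fact that $1+(-1)^{y}$ equals $2$ when $y$ is even and $0$ otherwise, one reduces
\[
R_{\mathbf{A}_s}(\tau)+R_{\mathbf{B}_s}(\tau)=2\!\!\sum_{i\in J_s(\tau)}\!\!\xi_q^{a(i)-a(j)},\qquad \sum_{X\in\{A,B,D,E\}}R_{\mathbf{X}_s}(\tau)=4\!\!\sum_{i\in J_s'(\tau)}\!\!\xi_q^{a(i)-a(j)},
\]
where $J_s(\tau)$ collects $0\le i\le L-1-\tau$ with $i_{\pi(1)}=j_{\pi(1)}$ and with $i,\,j=i+\tau$ both lying in blocks active for $\mathbf{A}_s$, and $J_s'(\tau)$ further imposes $i_m=j_m$. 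Blocks are labelled by the pair $(i_{m-1},i_m)$: the patterns $(0,0),(1,0),(0,1),(1,1)$ correspond to blocks $1,2,3,4$ respectively.

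The second stage is the bit-flip pairing. For $i\in J_s(\tau)$ with $i\ne j$, let $t$ be the smallest index with $i_{\pi(t)}\ne j_{\pi(t)}$ and set $i':=i\oplus 2^{\pi(t-1)-1}$, $j':=i'+\tau$. The short calculation from the proof of Theorem~\ref{thm-acp1} gives $\xi_q^{a(i)-a(j)}+\xi_q^{a(i')-a(j')}=0$, so the sum collapses pairwise provided $i'$ still lies in $J_s(\tau)$. The decisive point is that the hypothesis $\pi(m-1)=m-1,\ \pi(m)=m$ forces $\pi$ to restrict to a permutation of $\{1,\dots,m-2\}$; hence whenever $t\le m-1$ the flipped coordinate $\pi(t-1)$ lies in $\{1,\dots,m-2\}$, so that $(i_{m-1},i_m)$ and $(j_{m-1},j_m)$ are left intact and activeness is preserved. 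The only way to have $t=m$ is that $i$ and $j$ agree on every bit below $m$, which forces $\tau=2^{m-1}=2H$. Therefore for every $\tau\notin\{0,2H\}$ the pairing delivers total cancellation for all five values of $s$.

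It then suffices to treat $\tau=2H$ separately. For $s=9$ (active blocks $1,4$) the shift pairs block $1$ with block $3$, or block $2$ with block $4$; in either case one partner is inactive, so $J_9(2H)=\emptyset$ and $(\mathbf{A}_9,\mathbf{B}_9)$ is a genuine GCP. For $s=11$ only the pair $i\in$ block $2,\ j\in$ block $4$ survives; since $\pi(m-1)=m-1$ and $\pi(m)=m$, the only nonzero quadratic contribution comes from $x_{m-1}x_m$, giving $a(i)-a(j)=-\tfrac{q}{2}-c_m$, so that summing over the $H$ valid indices yields $|R_{\mathbf{A}_{11}}(2H)+R_{\mathbf{B}_{11}}(2H)|=2H$. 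An analogous computation for $s=13$ (surviving pair $i\in$ block $1,\ j\in$ block $3$, $a(i)-a(j)=-c_m$) also produces magnitude $2H$. For $s=5$ and $s=10$ the extra CSS factor $1+(-1)^{i_m-j_m}$ vanishes identically at $\tau=2H$ because $i_m\ne j_m$ there; moreover, activeness on blocks $\{1,3\}$ (resp.\ $\{2,4\}$) forces $i_{m-1}=j_{m-1}$, so $t\notin\{m-1,m\}$ throughout $J_s'$ and the pairing completes the CSS proof.

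The main obstacle is ensuring that the pairing $i\mapsto i'$ stays inside the active index set once blocks have been punctured; without the hypothesis $\pi(m-1)=m-1,\ \pi(m)=m$, a flip at $\pi(t-1)$ could push $i'$ or $j'$ into an inactive block and destroy the cancellation. This hypothesis both confines every flip to the low bits $\{1,\dots,m-2\}$ whenever $t\le m-1$ and localises the single exceptional shift $\tau=2H$, after which identifying the surviving block pair for each $s\in\{5,9,10,11,13\}$ and computing the resulting constant is routine.
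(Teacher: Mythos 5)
Your proof is correct, and it is organized along a genuinely different route than the paper's. The paper proves the case $s=11$ (asserting the other values of $s$ are similar) by decomposing the autocorrelation of the punctured length-$4H$ sequence into auto- and cross-correlations of the length-$H$ block subsequences, one shift range at a time ($0<\tau<H$, $H\le\tau<2H$, $2H\le\tau<3H$, $3H\le\tau<4H$), invoking Theorem~\ref{thm-acp1} for the block GCPs, running the bit-flip cancellation separately on each surviving cross term, and evaluating the boundary shifts $\tau=H,2H,3H$ explicitly. You instead keep the punctured length-$4H$ sequence intact, write $R_{\mathbf{A}_s}+R_{\mathbf{B}_s}$ (and the four-term CSS sum) as a single exponential sum over active index pairs, and apply the Davis--Jedwab flip at position $\pi(t-1)$ globally, with the key observation that $\pi(m-1)=m-1$, $\pi(m)=m$ confines every flip to bits $1,\dots,m-2$, so the flip never moves an index out of an active block, and that the only unpairable case $t=m$ forces $\tau=2H$. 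This buys a uniform proof for all five values of $s$ and makes transparent exactly why the hypothesis on $\pi$ is needed; the paper's block decomposition, by contrast, exhibits the underlying GCP/cross-correlation structure explicitly but at the cost of a case analysis per shift range and per $s$. Your evaluations of the exceptional shift ($a(i)-a(j)=-\frac{q}{2}-c_m$ for $s=11$ with the surviving pair in blocks $2$ and $4$, $-c_m$ for $s=13$, an empty index set for $s=9$, and the identically vanishing factor $1+(-1)^{i_m-j_m}$ for $s=5,10$) agree with the paper's constants up to the sign $\xi_q^{q/2}=-1$, so the stated magnitudes $6H$, $2H$, $0$ all match.
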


\begin{proof}
For $s=11$, to prove $(\mathbf{A}_{11},\mathbf{B}_{11})$ is a GCP, we need to demonstrate that $R_{\mathbf{A}_{11}}(\tau)+R_{\mathbf{B}_{11}}(\tau)=0$ when $0<\tau\leq4H-1$.
\begin{itemize}
  \item When $0<\tau\leq H-1$, it can be obtained that
  \begin{eqnarray*}
    R_{\mathbf{A}_{11}}(\tau) &=& R_{\mathbf{A}_1}(\tau)+R^*_{\mathbf{A}_2,\mathbf{A}_1}(H-\tau)
    +R_{\mathbf{A}_2}(\tau)+R_{\mathbf{A}_8}(\tau),\\
      R_{\mathbf{B}_{11}}(\tau) &=& R_{\mathbf{B}_1}(\tau)+R^*_{\mathbf{B}_2,\mathbf{B}_1}(H-\tau)
    +R_{\mathbf{B}_2}(\tau)+R_{\mathbf{B}_8}(\tau).
  \end{eqnarray*}
   By Theorem \ref{thm-acp1}, $(\mathbf{A}_1,\mathbf{B}_1),(\mathbf{A}_2,\mathbf{B}_2),
  (\mathbf{A}_8,\mathbf{B}_8)$ are GCPs, so we have
  $$R_{\mathbf{A}_{11}}(\tau)+R_{\mathbf{B}_{11}}(\tau)=R^*_{\mathbf{A}_2,\mathbf{A}_1}(H-\tau)
  +R^*_{\mathbf{B}_2,\mathbf{B}_1}(H-\tau).$$
  For $0\leq i\leq H-1-(H-\tau)=\tau-1$, let $j=i+H-\tau$. Since $\pi(m-1)=m-1,\pi(m)=m$, then
  \begin{eqnarray*}
   \begin{array}{ll}
     A_1(\mathbf{x})=\frac{q}{2}\sum_{k=1}^{m-3}x_{\pi(k)}x_{\pi(k+1)}
     +\sum_{k=1}^{m-2}c_kx_k+c, & B_1(\mathbf{x})=A_1(\mathbf{x})+\frac{q}{2}x_{\pi(1)}, \\
     A_2(\mathbf{x})=A_1(\mathbf{x})+x_{\pi(m-2)}+c_{m-1}, & B_2(\mathbf{x})=A_2(\mathbf{x})+\frac{q}{2}x_{\pi(1)},  \\
  A_8(\mathbf{x})=A_1(\mathbf{x})+x_{\pi(m-2)}+c_{m-1}+c_m+\frac{q}{2}, & B_8(\mathbf{x})=A_8(\mathbf{x})+\frac{q}{2}x_{\pi(1)},  \end{array}
  \end{eqnarray*}
  so we have
  \begin{eqnarray*}
    A_2(i)-A_1(j) &=&\frac{q}{2}\sum_{k=1}^{m-3}\left(i_{\pi(k)}i_{\pi(k+1)}-j_{\pi(k)}j_{\pi(k+1)}\right)
     +\sum_{k=1}^{m-2}c_k\left(i_k-j_k\right)+\frac{q}{2}i_{\pi(m-2)}+c_{m-1}, \\
    B_2(i)-B_1(j) &=&A_2(i)-A_1(j)+\frac{q}{2}\left(i_{\pi(1)}-j_{\pi(1)}\right),
  \end{eqnarray*}
  and then
    \begin{eqnarray*}
  R^*_{\mathbf{A}_2,\mathbf{A}_1}(H-\tau)
  +R^*_{\mathbf{B}_2,\mathbf{B}_1}(H-\tau)&=&2\sum_{i\in J(H-\tau)}\xi_q^{-\left(A_2(i)-A_1(j)\right)},
\end{eqnarray*}
where $J(H-\tau)=\{0\le i\le H-1-(H-\tau): i_{\pi(1)}=j_{\pi(1)}\}$. For any $i\in J(H-\tau)$, let $t$ be the smallest integer such that $i_{\pi(t)}\ne j_{\pi(t)}$. which implies $2\leq t\leq m-2$.
Let $i'$ and $j'$ be integers which are different from $i$ and $j$ in only one position $\pi(t-1)$, i.e., $i'_{\pi(t-1)}=1-i_{\pi(t-1)}$, respectively, and so $j'=i'+\tau$. Then,
\begin{eqnarray*}
A_2(i)-A_1(j)-\left(A_2\left(i'\right)-A_1\left(j'\right)\right) \equiv\frac{q}{2}~~(\bmod~q),
\end{eqnarray*}
which implies
$\xi_q^{A_2(i)-A_1(j)}=(-1)\xi_q^{A_2(i')-A_1(j')}$. Hence,
\begin{eqnarray*}
 R^*_{\mathbf{A}_2,\mathbf{A}_1}(H-\tau)
  +R^*_{\mathbf{B}_2,\mathbf{B}_1}(H-\tau)=\sum_{i\in J(H-\tau)}\xi_q^{-\left(A_2(i)-A_1(j)\right)}
+\sum_{i'\in J(H-\tau)}\xi_q^{-\left(A_2(i')-A_1(j')\right)}=0,
\end{eqnarray*}
which implies $R_{\mathbf{A}_0}(\tau)+R_{\mathbf{B}_0}(\tau)=0$.
  \item When $H\leq\tau\leq2H-1$, we have
  \begin{eqnarray*}
    R_{\mathbf{A}_{11}}(\tau) &=& R_{\mathbf{A}_1,\mathbf{A}_2}(\tau-H),\\
      R_{\mathbf{B}_{11}}(\tau) &=& R_{\mathbf{B}_1,\mathbf{B}_2}(\tau-H).
  \end{eqnarray*}
  If $\tau=H$, since $\pi(m-1)=m-1,\pi(m)=m$, then $\pi(m-2)\leq m-2$ and
    \begin{eqnarray*}
    R_{\mathbf{A}_{11}}(H) = R_{\mathbf{B}_{11}}(H)= R_{\mathbf{A}_1,\mathbf{A}_2}(0)
    =\sum_{i=0}^{H-1}\xi_q^{-\frac{q}{2}i_{\pi(m-2)}-c_{m-1}}=\xi_q^{-c_{m-1}}
    \sum_{i=0}^{H-1}(-1)^{i_{\pi(m-2)}}=0.
  \end{eqnarray*}
  If $H<\tau\leq2H-1$, for $0\leq i\leq 4H-1-(\tau-H)$, let $j=i+\tau$, then we can get that
  \begin{eqnarray*}
    A_1(i)-A_2(j) &=&\frac{q}{2}\sum_{k=1}^{m-3}\left(i_{\pi(k)}i_{\pi(k+1)}-j_{\pi(k)}j_{\pi(k+1)}\right)
     +\sum_{k=1}^{m-2}c_k\left(i_k-j_k\right)-\frac{q}{2}j_{\pi(m-2)}-c_{m-1}, \\
    B_1(i)-B_2(j) &=&A_1(i)-A_2(j)+\frac{q}{2}\left(i_{\pi(1)}-j_{\pi(1)}\right).
  \end{eqnarray*}
  With similar arguments as the first case in this proof, it can also be obtained that
  \begin{eqnarray*}
    R_{\mathbf{A}_{11}}(\tau)+ R_{\mathbf{B}_{11}}(\tau) = R_{\mathbf{A}_1,\mathbf{A}_2}(\tau-H)+R_{\mathbf{B}_1,\mathbf{B}_2}(\tau-H)=0.
  \end{eqnarray*}
  \item When $2H\leq\tau\leq3H-1$, we have
  \begin{eqnarray*}
    R_{\mathbf{A}_{11}}(\tau) &=& R^*_{\mathbf{A}_8,\mathbf{A}_1}(H-(\tau-2H))+R_{\mathbf{A}_2,\mathbf{A}_8}(\tau-2H),\\
      R_{\mathbf{B}_{11}}(\tau) &=& R^*_{\mathbf{B}_8,\mathbf{A}_1}(H-(\tau-2H))+R_{\mathbf{B}_2,\mathbf{B}_8}(\tau-2H).
  \end{eqnarray*}
  If $\tau=2H$, then
   \begin{eqnarray*}
    R_{\mathbf{A}_{11}}(2H)= R_{\mathbf{B}_{11}}(2H)= R_{\mathbf{A}_6,\mathbf{A}_8}(0)=-H\xi_q^{-c_m},
  \end{eqnarray*}
  which implies $|R_{\mathbf{A}_{11}}(2H)+R_{\mathbf{B}_{11}}(2H)|=2H$.
  \item When $3H\leq\tau\leq4H-1$, it can be obtained that
   \begin{eqnarray*}
    R_{\mathbf{A}_{11}}(\tau) &=& R_{\mathbf{A}_1,\mathbf{A}_8}(\tau-3H),\\
      R_{\mathbf{B}_{11}}(\tau) &=& R_{\mathbf{B}_1,\mathbf{B}_8}(\tau-3H).
  \end{eqnarray*}
  If $\tau=3H$, we have
     \begin{eqnarray*}
    R_{\mathbf{A}_{11}}(3H)= R_{\mathbf{B}_{11}}(3H) = R_{\mathbf{A}_1,\mathbf{A}_8}(0)=\sum_{i=0}^{H-1}
    \xi_q^{-\frac{q}{2}i_{\pi(m-2)}-\frac{q}{2}-c_{m-1}-c_m}=0.
  \end{eqnarray*}
  If $3H<\tau\leq4H-1$, for $0\leq i\leq4H-1-\tau$, let $j=i+u$, then
   \begin{eqnarray*}
    A_1(i)-A_8(j) &=&\frac{q}{2}\sum_{k=1}^{m-3}\left(i_{\pi(k)}i_{\pi(k+1)}-j_{\pi(k)}j_{\pi(k+1)}\right)
     +\sum_{k=1}^{m-2}c_k\left(i_k-j_k\right)-\frac{q}{2}j_{\pi(m-2)}\\
     &&-\frac{q}{2}-c_{m-1}-c_m, \\
    B_1(i)-B_8(j) &=&A_1(i)-A_8(j)+\frac{q}{2}\left(i_{\pi(1)}-j_{\pi(1)}\right).
  \end{eqnarray*}
  With similar arguments as the first case in this proof, it can also be obtained that
  \begin{eqnarray*}
    R_{\mathbf{A}_{11}}(\tau)+ R_{\mathbf{B}_{11}}(\tau) = R_{\mathbf{A}_1,\mathbf{A}_8}(\tau-3H)+R_{\mathbf{B}_1,\mathbf{B}_8}(\tau-3H)=0.
  \end{eqnarray*}
\end{itemize}
We can similarly prove the results of $s=5,9,10,13$, so we omit it here.
\end{proof}

\begin{Remark}
The constructions of GCPs, CSSs and ACPs in Theorem \ref{thm-non-acp1} are new for length $2^m$ and spectral nulls inside.
\end{Remark}

The following lemma is straightforward from Lemma \ref{lem-bound} and Eq. (\ref{eq-pmeprN}).

\begin{Corollary}\label{cor-ncpmepr1}
With the same notations as Theorem \ref{thm-non-acp1}, for $s=5,9,10,11,13$, we have
\begin{eqnarray*}
\hbox{PMEPR}(\mathbf{A}_s)&\leq&
\left\{\begin{array}{ll}
\frac{10}{3}, & s=11,13,\\
2, &s=9,\\
4, &s=5,10,
\end{array}\right.
\end{eqnarray*}
which implies that $\hbox{PMEPR}_{\mathcal{NC}}(\mathbf{a})\leq4$.
\end{Corollary}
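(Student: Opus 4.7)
The plan is to extend the decomposition-and-pairing technique from the proof of Theorem \ref{thm-acp1} to the non-contiguous configurations $s\in\{5,9,10,11,13\}$. For each such $s$ I would partition the lag range $1\le\tau\le 4H-1$ into the four intervals $[(k-1)H,kH-1]$ with $k=1,2,3,4$, and, within each interval, rewrite $R_{\mathbf{A}_s}(\tau)+R_{\mathbf{B}_s}(\tau)$ (for $s=9,11,13$) or the four-term analogue $R_{\mathbf{A}_s}(\tau)+R_{\mathbf{B}_s}(\tau)+R_{\mathbf{D}_s}(\tau)+R_{\mathbf{E}_s}(\tau)$ (for $s=5,10$) as a sum of auto- and cross-correlations of the length-$H$ sub-blocks $\psi(\mathbf{a}_k),\psi(\mathbf{b}_k),\psi(\mathbf{d}_k),\psi(\mathbf{e}_k)$ over just the active RBs indicated by the bit pattern of $s$. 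Which $(k_1,k_2)$ pair contributes to a given interval is dictated purely by the positions of the zero blocks.

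For the within-RB autocorrelation pieces, I invoke Theorem \ref{thm-acp1}(3), which already guarantees that each $(\mathbf{A}_j,\mathbf{B}_j)$ with $j\in\{1,2,4,8\}$ is a GCP; restricted to its single nonzero RB this says $(\psi(\mathbf{a}_k),\psi(\mathbf{b}_k))$ is a length-$H$ GCP, so all such pieces vanish for $\tau\ne 0$. For the surviving cross-correlation pieces I apply the same pairing argument used to prove the ACP claims for $s=7,14$: if $i_{\pi(1)}\ne j_{\pi(1)}$ where $j=i+\sigma$, then the $\frac{q}{2}x_{\pi(1)}$ term flips the sign between $\mathbf{A}_s$ and $\mathbf{B}_s$ (or between $\{\mathbf{A}_s,\mathbf{B}_s\}$ and $\{\mathbf{D}_s,\mathbf{E}_s\}$ in the four-sequence case) and the exponentials cancel in pairs; if $i_{\pi(1)}=j_{\pi(1)}$, I locate the smallest $t$ with $i_{\pi(t)}\ne j_{\pi(t)}$, flip the bit at position $\pi(t-1)$ to produce $(i',j')$, and use the quadratic cross-term in $a(\mathbf{x})$ to show that $A_s(i)-A_s(j)-\bigl(A_s(i')-A_s(j')\bigr)\equiv\frac{q}{2}\pmod q$, forcing cancellation. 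In the CSS cases $s=5,10$ the same pairing works, now driven by the product $\bigl(1+(-1)^{i_{\pi(1)}-j_{\pi(1)}}\bigr)\bigl(1+(-1)^{i_m-j_m}\bigr)$ arising from both the $x_{\pi(1)}$ and $x_m$ perturbations in $b,d,e$.

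The main obstacle is handling the boundary lags $\tau\in\{H,2H,3H\}$, where the pairing argument can break down because the shift exactly equals a gap width and the smallest differing coordinate may be forced to be $m$ or $m-1$. At these lags I would compute the surviving cross-correlation $R_{\psi(\mathbf{a}_{k_1}),\psi(\mathbf{a}_{k_2})}(\sigma)$ directly from the Boolean-function expressions, using $\pi(m-1)=m-1$ and $\pi(m)=m$ from Theorem \ref{thm-acp1}. For $s=9$, the gap has width $2H$ so no non-trivial cross-correlation survives at any multiple of $H$; sums of the form $\sum_{i=0}^{H-1}(-1)^{i_{\pi(m-2)}}$ vanish because $\pi(m-2)\le m-2$ is an honest coordinate of the length-$H$ block, giving the GCP claim. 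For $s=11,13$, the single gap of width $H$ yields a surviving contribution at $\tau=2H$ (aligning one active RB across the gap with the next active RB), which evaluates to $\pm H\,\xi_q^{c_\bullet}$ in each of $\mathbf{A}_s$ and $\mathbf{B}_s$ and combines to absolute value $2H$, matching the ACP amplitude stated. The CSS cases $s=5,10$ require the analogous boundary computations to cancel at every $\tau>0$.

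Once all of these autocorrelation structures are verified, the PMEPR bounds in Corollary \ref{cor-ncpmepr1} drop out from Eq. (\ref{eq-pmeprN}) and Lemma \ref{lem-bound}: the GCP for $s=9$ gives PMEPR at most $2$, the four-sequence CSSs for $s=5,10$ give PMEPR at most $4$, and the ACPs for $s=11,13$ contribute an extra $2\cdot 2H/(6H)=2/3$ on top of the Lemma \ref{lem-bound} baseline of $2$, yielding the $10/3$ bound. I expect the bulk of the effort to lie not in any deep new idea but in the careful bookkeeping of which $(k_1,k_2)$ pair of active RBs is coupled by each $\tau$-interval for each of the five values of $s$, and in the explicit evaluation of the $O(1)$ boundary cross-correlations at $\tau\in\{H,2H,3H\}$.
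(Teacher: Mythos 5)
Your proposal follows the paper's route: the corollary is obtained exactly as you describe, by feeding the ACP/GCP/CSS structure of the non-contiguous subsequences (the content of Theorem \ref{thm-non-acp1}, which your block decomposition and bit-flip pairing sketch essentially re-derives, including the explicit boundary-lag computations at $\tau\in\{H,2H,3H\}$) into Lemma \ref{lem-bound} and Eq. (\ref{eq-pmeprN}). One small slip: for $s=11,13$ the excess term must be normalized by the single-sequence energy $3H$, giving $2+\frac{2\cdot 2H}{3H}=\frac{10}{3}$, not $\frac{2\cdot 2H}{6H}=\frac{2}{3}$ (which would yield $\frac{8}{3}$); your stated final bound $\frac{10}{3}$ is nevertheless the correct one.
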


\begin{Example}\label{eg-nc1}
With notations in Example \ref{eg-1}, Table \ref{table-eg-nc1} shows the PMEPR properties of the non-contiguous subsequences of sequence $\mathbf{a}$ for various $m$ and $s$. It can be observed that
\begin{itemize}
  \item for $L=8$, $\hbox{PMEPR}_{\mathcal{NC}}(\mathbf{a})=4.0000$;
  \item for $L=16$, $\hbox{PMEPR}_{\mathcal{NC}}(\mathbf{a})=3.4142$;
  \item for $L=32$, $\hbox{PMEPR}_{\mathcal{NC}}(\mathbf{a})=4.0000$;
  \item for $L=64$, $\hbox{PMEPR}_{\mathcal{NC}}(\mathbf{a})=3.6419$.
\end{itemize}
 Hence, Theorem 3 is verified.
\begin{table}[ht]
  \centering
\caption{PMEPR comparison of the non-contiguous subsequences for various $m$ and $s$ in Example \ref{eg-nc1}}\label{table-eg-nc1}
\begin{tabular}{|c|c|c|c|c|c|}
  \hline
  \diagbox{$L=2^m$}{$\hbox{PMEPR}(\mathbf{A}_s)$}{$\mathbf{A}_s$} & $\mathbf{A}_5$ &$\mathbf{A}_9$  &   $\mathbf{A}_{10}$ &$\mathbf{A}_{11}$& $\mathbf{A}_{13}$ \\ \hline
  $8$&  4.0000 & 1.7071  & 3.4142& 1.9024  &  2.6667   \\ \hline
  $16$& 3.4142 & 2.0000 &   3.3066&  3.1910 & 3.3166  \\ \hline
  $32$&   4.0000&1.8210 &    3.4765&3.2077  & 3.3166\\ \hline
  $64$&  3.6419& 2.0000  &    3.6029 & 3.3158 &  3.3166 \\ \hline
\end{tabular}
\end{table}
\end{Example}

\begin{Theorem}\label{thm-non-gcs1}
  With the same notations as Theorem \ref{thm-gcs1}, for $s=5,9,10,11,13$, $(\mathbf{A}_s,\mathbf{B}_s)$ has the following properties:
  \begin{enumerate}
    \item $(\mathbf{A}_{11},\mathbf{B}_{11})$ and $(\mathbf{A}_{13},\mathbf{B}_{13})$ are ACPs, i.e., for $s=11,13$,
\begin{eqnarray*}
|R_{\mathbf{A}_s}(\tau)+R_{\mathbf{B}_s}(\tau)|&=&
\left\{\begin{array}{ll}
6H, & \tau=0,\\
 2H, &\tau=H;\\
0, & \text{otherwise}.
\end{array}\right.
\end{eqnarray*}
    \item $(\mathbf{A}_5,\mathbf{B}_5)$, $(\mathbf{A}_9,\mathbf{B}_9)$ and $(\mathbf{A}_{10},\mathbf{B}_{10})$ are GCPs.
  \end{enumerate}
\end{Theorem}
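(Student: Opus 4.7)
The plan is to mirror the case-analysis argument used in the proof of Theorem \ref{thm-non-acp1}, only now with the swapped permutation $\pi(m)=m-1$, $\pi(m-1)=m$. For each $s\in\{5,9,10,11,13\}$ I would split the range $1\le\tau\le 4H-1$ into the four sub-ranges $[1,H-1]$, $[H,2H-1]$, $[2H,3H-1]$, $[3H,4H-1]$, and in each sub-range decompose $R_{\mathbf{A}_s}(\tau)+R_{\mathbf{B}_s}(\tau)$ into (i) pure auto-correlation pieces $R_{\mathbf{A}_j}(\cdot)+R_{\mathbf{B}_j}(\cdot)$ for the ``on'' blocks $j\in\{1,2,4,8\}$ used to build $\mathbf{A}_s$, and (ii) cross-correlation pieces $R_{\mathbf{A}_j,\mathbf{A}_k}(\cdot)+R_{\mathbf{B}_j,\mathbf{B}_k}(\cdot)$ between distinct on-blocks separated by a zero gap. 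By Theorem \ref{thm-gcs1}, every $(\mathbf{A}_j,\mathbf{B}_j)$ with $j\in\{1,2,4,8\}$ is a GCP, so all auto-correlation contributions vanish, and only the cross terms need work.

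For each cross piece I would use the same character-sum cancellation as in Theorem \ref{thm-non-acp1}: writing $j=i+u$, the identity
\begin{eqnarray*}
B_j(i)-B_k(j) \;=\; A_j(i)-A_k(j) + \tfrac{q}{2}\bigl(i_{\pi(1)}-j_{\pi(1)}\bigr)
\end{eqnarray*}
restricts the combined sum to indices with $i_{\pi(1)}=j_{\pi(1)}$. On this index set, let $t$ be the smallest integer with $i_{\pi(t)}\ne j_{\pi(t)}$; pairing $i$ with the index $i'$ obtained by flipping bit $\pi(t-1)$ (and $j$ with $j'=i'+u$), the quadratic term $\tfrac{q}{2}\sum_k x_{\pi(k)}x_{\pi(k+1)}$ forces $A_j(i)-A_k(j)$ and $A_j(i')-A_k(j')$ to differ by $\tfrac{q}{2}$ modulo $q$, so the paired contributions cancel. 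Provided $t$ can be chosen so that both $i'$ and $j'$ remain inside the appropriate block range, the whole cross-correlation vanishes.

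The key new point, compared with Theorem \ref{thm-non-acp1}, is that swapping $\pi(m-1)$ and $\pi(m)$ moves the obstruction from $\tau=2H$ to $\tau=H$. For $s=11$ (and analogously $s=13$) the surviving cross-correlation $R_{\mathbf{A}_1,\mathbf{A}_2}(0)+R_{\mathbf{B}_1,\mathbf{B}_2}(0)$ at shift $\tau=H$ reduces, after substituting the boolean forms of $A_1,A_2,B_1,B_2$, to a constant-modulus sum of magnitude $2H$; all other shifts cancel by the pairing argument above. This gives the ACP statement in (1). For $s\in\{5,9,10\}$ the analogous ``residual'' sums at $\tau=H$ still contain a free bit over which to average, and therefore vanish, yielding the clean GCP conclusion in (2).

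The main obstacle will be bookkeeping: in each of the four $\tau$-ranges and for each of the five values of $s$, one has to write down exactly which auto- and cross-correlation pieces appear (the patterns differ because the zero block sits in a different position for $s=5,9,10,11,13$), and, in every cross term, verify that the bit-flip index $t$ can be chosen with $\pi(t-1)\le m-2$ so that the paired index $i'$ lies in the required block. This last condition is what isolates the unique non-cancelling shift and fixes it at $\tau=H$ under the new permutation. Once this case-by-case verification is laid out, the rest of the proof is a direct translation of the techniques already established in Theorems \ref{thm-gcs1} and \ref{thm-non-acp1}.
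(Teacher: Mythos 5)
Your proposal is correct and follows essentially the same route as the paper, whose own proof of Theorem \ref{thm-non-gcs1} is just a pointer to the argument of Theorem \ref{thm-non-acp1}: decompose $R_{\mathbf{A}_s}(\tau)+R_{\mathbf{B}_s}(\tau)$ block-wise, kill the auto-correlation pieces via the GCP parts of Theorem \ref{thm-gcs1}, kill the cross-correlation pieces via the $\pi(t-1)$ bit-flip pairing, and isolate the single surviving aligned-block sum of modulus $2H$ at $\tau=H$ for $s=11,13$. One minor slip to fix when writing it out: for $s=5,9,10$ the aligned-block residual sums sit at $\tau=2H$ (blocks 1 and 3, respectively blocks 2 and 4) and at $\tau=3H$ (blocks 1 and 4), not at $\tau=H$, where these subsequences have no overlapping on-blocks at all; the averaging over the free bit $i_{\pi(m-2)}$ that you invoke is precisely what annihilates those sums, since swapping $\pi(m-1)$ and $\pi(m)$ puts the term $\frac{q}{2}x_{\pi(m-2)}x_m$ into play for block pairs differing in bit $m$ while leaving the pair differing only in bit $m-1$ with a constant phase.
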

\begin{proof}
 The proof here is similar with it of Theorem \ref{thm-non-acp1}, so we omit it here.
\end{proof}

\begin{Remark}
The constructions of GCPs, and ACPs in Theorem \ref{thm-non-gcs1} are new for lengths $2^m$ and spectral nulls inside.
\end{Remark}

As Corollary \ref{cor-ncpmepr1}, we get an upper bound on the PMEPR of $\mathbf{A}_s$ in Theorem \ref{thm-non-gcs1}.

\begin{Corollary}\label{cor-ncpmepr2}
With the same notations as Theorem \ref{thm-non-gcs1}, for $s=5,9,10,11,13$, we have
\begin{eqnarray*}
\hbox{PMEPR}(\mathbf{A}_s)&\leq&
\left\{\begin{array}{ll}
\frac{10}{3}, & s=11,13;\\
2, &s=5,9,10\\
\end{array}\right.
\end{eqnarray*}
which implies that $\hbox{PMEPR}_{\mathcal{NC}}(\mathbf{a})\leq\frac{10}{3}$.
\end{Corollary}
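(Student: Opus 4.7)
The plan is to deduce both PMEPR bounds directly from the structural results of Theorem \ref{thm-non-gcs1}, in parallel with the derivation of Corollary \ref{cor-ncpmepr1}. The only tools needed are Lemma \ref{lem-bound}, Remark \ref{re-gdjpmepr}, and the general inequality (\ref{eq-pmeprN}); since all correlation bookkeeping has been done in Theorem \ref{thm-non-gcs1}, no new correlation computation is required.

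For the ``easy'' indices $s\in\{5,9,10\}$, Theorem \ref{thm-non-gcs1}(2) says that $(\mathbf{A}_s,\mathbf{B}_s)$ is a GCP. Each of these $s$ has exactly two indicator bits equal to $1$, so $\mathbf{A}_s$ and $\mathbf{B}_s$ share the common energy $R_{\mathbf{A}_s}(0)=R_{\mathbf{B}_s}(0)=2H$. Lemma \ref{lem-bound} then applies directly and gives $\hbox{PMEPR}(\mathbf{A}_s)\le 2$.

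For the remaining indices $s\in\{11,13\}$, Theorem \ref{thm-non-gcs1}(1) asserts that $(\mathbf{A}_s,\mathbf{B}_s)$ is an ACP whose summed aperiodic autocorrelations vanish except at $\tau=0$ (value $6H$) and at $\tau=H$ (magnitude $2H$). Both $s=11$ and $s=13$ activate three of the four RBs, so $R_{\mathbf{A}_s}(0)=R_{\mathbf{B}_s}(0)=3H$. Substituting these numbers into (\ref{eq-pmeprN}) with $N=2$ yields
$$
\hbox{PMEPR}(\mathbf{A}_s)\le \frac{1}{3H}\bigl(6H+2\cdot 2H\bigr)=\frac{10}{3}.
$$
Taking the maximum over $s\in\{5,9,10,11,13\}$ then gives $\hbox{PMEPR}_{\mathcal{NC}}(\mathbf{a})\le \tfrac{10}{3}$.

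No step is really hard; the only detail worth double-checking is that the two sequences in each pair $(\mathbf{A}_s,\mathbf{B}_s)$ share a common energy, so that Lemma \ref{lem-bound} and (\ref{eq-pmeprN}) can be applied with a single denominator. This is immediate because $\mathbf{B}_s$ is obtained from $\mathbf{A}_s$ by a unit-modulus phase twist on the same set of active RBs, so their squared norms agree. The improvement from the ``$4$'' bound in Corollary \ref{cor-ncpmepr1} to ``$2$'' here for $s=5,10$ is exactly the payoff of Theorem \ref{thm-non-gcs1}'s stronger conclusion that these pairs are genuine GCPs rather than members of a CSS of size $4$.
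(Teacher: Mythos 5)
Your proof is correct and follows essentially the same route the paper takes: the bound $2$ for $s=5,9,10$ comes from Lemma \ref{lem-bound} applied to the GCPs of Theorem \ref{thm-non-gcs1}, and the bound $\frac{10}{3}$ for $s=11,13$ comes from plugging the ACP correlation values ($6H$ at $\tau=0$, magnitude $2H$ at $\tau=H$, energy $3H$) into Eq.~(\ref{eq-pmeprN}), exactly as in Corollary \ref{cor-ncpmepr1}. The equal-energy check for $(\mathbf{A}_s,\mathbf{B}_s)$ that you flag is the right detail to verify, and your justification of it is sound.
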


\begin{Example}\label{eg-nc2}
With notations in Example \ref{eg-2}, Table \ref{table-eg-nc2} shows the PMEPR properties of the non-contiguous subsequences of sequence $\mathbf{a}$ for various $m$ and $s$. It can be observed that
\begin{itemize}
  \item for $L=8$, $\hbox{PMEPR}_{\mathcal{NC}}(\mathbf{a})=2.6667$;
  \item for $L=16$, $\hbox{PMEPR}_{\mathcal{NC}}(\mathbf{a})=3.0000$;
  \item for $L=32$, $\hbox{PMEPR}_{\mathcal{NC}}(\mathbf{a})=3.3166$;
  \item for $L=64$, $\hbox{PMEPR}_{\mathcal{NC}}(\mathbf{a})=3.3166$.
\end{itemize}
With these explanations, we verify Theorem 4.
\begin{table}[ht]
  \centering
\caption{PMEPR comparison of the non-contiguous subsequences for various $m$ and $s$ in Example \ref{eg-nc2}}\label{table-eg-nc2}
\begin{tabular}{|c|c|c|c|c|c|}
  \hline
  \diagbox{$L=2^m$}{$\hbox{PMEPR}(\mathbf{A}_s)$}{$\mathbf{A}_s$} & $\mathbf{A}_5$ & $\mathbf{A}_9$ &  $\mathbf{A}_{10}$&$\mathbf{A}_{11}$& $\mathbf{A}_{13}$  \\ \hline
  $8$&    1.0000 & 1.7071 &     1.0000 & 2.6667  &   1.6667   \\ \hline
  $16$&   2.0000 &2.0000 &     2.0000& 3.0000 &   1.9084  \\ \hline
  $32$&  1.7682 & 1.8210 &     1.7682 &3.3166  & 3.1910\\ \hline
  $64$&  2.0000  &  2.0000 &    2.0000&3.3166 &  3.1157 \\ \hline
\end{tabular}
\end{table}
\end{Example}
\begin{Remark}
Combining Corollaries 1--5, it can be obtained that the PMEPR of the contiguous and non-contiguous subsequences of GDJ sequences introduced in this paper are less than 4.
Note that the reduction of PMEPRs of proposed sequences is not because the increase of the average power. As defined in equation (3), for any GDJ sequence $\mathbf{a}$ and a given subcarrier index set $\Omega$, the average power is equal to the cardinality of $\Omega$,  i.e.,
\begin{eqnarray*}
 P_{av}({\mathbf{\tilde{a}}})=R_{\mathbf{\tilde{a}}}(0)=|\Omega|,
 \end{eqnarray*}
 where $\mathbf{\tilde{a}}$ is the complex-valued version of $\mathbf{a}$ corresponding to $\Omega$.
\end{Remark}
\begin{Remark}
The proposed sequences are constructed in a  systematic way: First, generate a sequence $\psi(\mathbf{a})$ from Theorem 1 or 2 depending on the number of subcarriers. Then, for a given subcarrier index set $\Omega$ allocated to this system, null the elements of $\psi(\mathbf{a})$ whose indices are not in $\Omega$ to zero.
\end{Remark}
\section{PMEPR Comparisons}

In this section, we compare the proposed
preamble sequences are undertaken, showing their good PMEPR properties for DSA OFDMA systems compared with $m$-sequences and Zadoff-Chu (ZC) sequences in terms of their PMEPR performances under DSA transmissions.  In our comparisons, we consider enlarged ZC sequence and $m$-sequence of length $64$ and $32$ by adding ``$-1$'' at the end of a ZC sequence and $m$-sequence, respectively. The parameters of ZC sequence and $m$-sequence are chosen from \cite{3GPP-17}, where the ZC sequence root index is 25. The proposed sequences in our comparisons are chosen from Examples \ref{eg-1} and \ref{eg-2}.

Tables \ref{table-comparison-32} and \ref{table-comparison-64} list the PMEPRs of some subsequences of $m$-sequences, ZC sequences and proposed preamble sequences, where ``Proposed Sequence Family $\mathcal{X}$'' refers to the proposed sequences in Theorem \ref{thm-acp1}, and ``Proposed Sequence Family $\mathcal{Y}$'' refers to the proposed sequences in Theorem \ref{thm-gcs1}. It can be observed that the PMEPRs of $\mathbf{A}_2$, $\mathbf{A}_9$ and $\mathbf{A}_{15}$ are significantly lower than those of  $m$-sequences and ZC sequences both in Tables \ref{table-comparison-32} and \ref{table-comparison-64}, which are 2 at most, while for $m$-sequences and ZC sequences, the PMEPRs of their subsequences can be as high as 4.5000 and 3.7842. In Table \ref{table-comparison-32}, the PMEPRs of $\mathbf{A}_{14}$ are lower than those of $m$-sequences and ZC sequences, while in Table \ref{table-comparison-64} they are slightly higher than those of $m$-sequences and ZC sequences. However, our proposed preamble sequences outperform  $m$-sequences and ZC sequences with regard to the maximum PMEPR of all the contiguous and non-contiguous subsequences.
\begin{table}[ht]
  \centering
\caption{PMEPR comparisons among $m$-sequences, ZC sequences and proposed sequences of length $L=32$  in DSA OFDMA systems}\label{table-comparison-32}
{\small
\begin{tabular}{|c|c|c|c|c|c|c|c|}
  \hline
  \diagbox[width=10em,trim=l]{Sequences}{$\hbox{PMEPR}(\mathbf{A}_s)$}{$\mathbf{A}_s$} & $\mathbf{A}_2$  &  $\mathbf{A}_{9}$&$\mathbf{A}_{14}$& $\mathbf{A}_{15}$&$\hbox{PMEPR}_{\mathcal{C}}(\mathbf{a})$& $\hbox{PMEPR}_{\mathcal{NC}}(\mathbf{a})$ & $\hbox{PMEPR}_{\mathcal{A}}(\mathbf{a})$ \\ \hline
  ZC sequence\cite{3GPP-17}&   2.8072 & 3.7842 & 3.6073  &  2.4250 &3.6313 & 4.0079 &4.0079\\ \hline
  $m$-sequence\cite{3GPP-17}&   4.5000  &   3.1269& 3.3333  & 2.2500  & 4.5000&3.1269&4.5000\\ \hline
  Proposed Sequence& \multirow{2}*{2.0000}  & \multirow{2}*{1.8210}&\multirow{2}*{3.3274}& \multirow{2}*{2.0000}&\multirow{2}*{3.3274}&\multirow{2}*{4.0000}&\multirow{2}*{4.0000}\\ Family $\mathcal{X}$&&&&&&&\\ \hline
  Proposed Sequence&  \multirow{2}*{2.0000}&\multirow{2}*{1.8210}&\multirow{2}*{1.9369} &\multirow{2}*{2.0000} &\multirow{2}*{4.0000}&\multirow{2}*{3.3166}&\multirow{2}*{4.0000}\\
   Family $\mathcal{Y}$&&&&&&&\\ \hline
\end{tabular}}
\end{table}
\begin{table}[ht]
  \centering
\caption{PMEPR comparisons among $m$-sequences, ZC sequences and proposed sequences of length $L=64$  in DSA OFDMA systems}\label{table-comparison-64}
{\small
\begin{tabular}{|c|c|c|c|c|c|c|c|}
  \hline
  \diagbox[width=10em,trim=l]{Sequences}{$\hbox{PMEPR}(\mathbf{A}_s)$}{$\mathbf{A}_s$} & $\mathbf{A}_2$  &  $\mathbf{A}_{9}$&$\mathbf{A}_{14}$& $\mathbf{A}_{15}$&$\hbox{PMEPR}_{\mathcal{C}}(\mathbf{a})$& $\hbox{PMEPR}_{\mathcal{NC}}(\mathbf{a})$ &$\hbox{PMEPR}_{\mathcal{A}}(\mathbf{a})$  \\ \hline
  ZC sequence\cite{3GPP-17}&   3.4609 & 3.4519 & 2.7979  &  2.7952 &4.6421 & 4.1302 &4.6421 \\ \hline
  $m$-sequence\cite{3GPP-17}&   2.8762  &  3.5175& 2.6213  & 2.2101  & 5.1374&3.5175&5.1374\\ \hline
 Proposed Sequence&\multirow{2}*{ 1.8210}  &  \multirow{2}*{ 2.0000}& \multirow{2}*{2.9419}  & \multirow{2}*{1.8210}&\multirow{2}*{3.1910}&\multirow{2}*{3.6419}&\multirow{2}*{3.6419}\\
   Family $\mathcal{X}$&&&&&&&\\ \hline
 Proposed Sequence& \multirow{2}*{1.8210}   & \multirow{2}*{2.0000}&\multirow{2}*{3.2424} & \multirow{2}*{1.8210} &\multirow{2}*{3.6419}&\multirow{2}*{3.3166}&\multirow{2}*{3.6419}\\
   Family $\mathcal{Y}$&&&&&&&\\ \hline
\end{tabular}}
\end{table}
\section{Conclusion}

In this paper, we have introduced two classes of preamble sequences from GDJ sequences for PMEPR reduction of OFDMA systems with DSA, specifically, over contiguous or non-contiguous spectral sub-bands which are carved from four adjacent RBs.  In the first class, the subsequences corresponding to contiguous and non-contiguous DSAs have PMEPRs upper bounded by 3.3334 and 4, respectively. On the other hand, the second class consists of subsequences corresponding to contiguous and non-contiguous DSAs have PMEPRs upper bounded by 4 and 3.3334, respectively. Compared with $m$-sequences and Zadoff-Chu sequences, our proposed sequences have better PMEPR properties for any OFDMA DSA schemes. We remark that the same PEMPR properties may be hard to attain when the number of RBs is larger than four. For example, it can be observed by concatenating the proposed preamble sequences $k$ times to an OFDMA system with $4k$-RBs, to form a $4k$-RB DSA OFDMA system, the resultant subsequence PMEPRs may be as high as $4k$. How to construct preamble sequences having low PMEPRs for any number of RBs  (different from four) will be a challenging but interesting future direction of this research.

\bibliographystyle{IEEEtran}
\bibliography{IEEEfull}

\end{document}